\newtheorem{theorem}{Theorem}
\newtheorem{definition}{Definition}
\newtheorem{proposition}{Proposition}
\definecolor{green}{rgb}{0, 0.5, 0}
\definecolor{orange}{rgb}{0.8, 0.6, 0.2}
\definecolor{orange2}{rgb}{1.0, 0.6, 0.2}
\definecolor{red}{rgb}{1.0, 0.0, 0.0}
\definecolor{teal}{rgb}{0.0, 0.4, 0.4}
\definecolor{purple}{rgb}{0.65,0,0.65}
\definecolor{saffron}{rgb}{0.95,0.75,0.2}
\definecolor{turquoise}{rgb}{0.0,0.5,0.5}
\definecolor{black}{rgb}{0.0, 0.0, 0.0}
\definecolor{gray}{rgb}{0.5, 0.5, 0.5}
\crefname{section}{§}{§§}
\Crefname{section}{§}{§§}
\newenvironment{packeditemize}{
\begin{list}{$\bullet$}{
\setlength{\labelwidth}{8pt}
\setlength{\itemsep}{0pt}
\setlength{\leftmargin}{\labelwidth}
\addtolength{\leftmargin}{\labelsep}
\setlength{\parindent}{0pt}
\setlength{\listparindent}{\parindent}
\setlength{\parsep}{0pt}
\setlength{\topsep}{3pt}}}{\end{list}}
\newcommand{\Tref}[1]{Table~\ref{#1}}
\newcommand{\Eref}[1]{Eq.~(\ref{#1})}
\newcommand{\Fref}[1]{Fig.~\ref{#1}}
\crefname{section}{§}{§§}
\definecolor{orange2}{rgb}{1.0, 0.6, 0.2}
\algnewcommand\algorithmicinitialization{\textbf{Initialization:}}
\algnewcommand\Initialization{\item[\algorithmicinitialization]}
\newcommand{\Name}{\texttt{SwitchPatch}\xspace}
\begin{document}

\title{SwitchPatch: Physical Adversarial Attack Strategy with Switchable Adversarial Objectives}

\author{Hanrui~Jiang$^{*}$,~
Yutong~Wu$^{*}$,~
Shiyi~Yao,~
Chen~Ling,~
Xingshuo~Han$^{\dagger}$,~
Hangcheng~Liu,~
Xinyi~Huang$^{\dagger}$,~\IEEEmembership{Member,~IEEE}
Tianwei~Zhang,~\IEEEmembership{Member,~IEEE}
  \thanks{$^{*}$ Hanrui Jiang and Yutong Wu contributed equally to this work and are co-first authors.}
  \thanks{$^{\dagger}$ Xingshuo Han and Xinyi Huang are the corresponding authors.}
  \IEEEcompsocitemizethanks{\IEEEcompsocthanksitem Hanrui Jiang is with Artificial Intelligence Thrust, Information Hub, Hong Kong University of Science and Technology (Guangzhou), Guangzhou 511455, China.
  \IEEEcompsocthanksitem Yutong Wu, Hangcheng Liu and Tianwei Zhang are with the School of Computer Science and Engineering, Nanyang Technological University, Singapore 639815.
  \IEEEcompsocthanksitem Shiyi Yao and Chen Ling are with the School of Cyber Science and Engineering, Wuhan University, Wuhan, China.
  \IEEEcompsocthanksitem Xingshuo Han and Xinyi Huang are with the College of Computer Science and Technology/College of Software, Nanjing University of Aeronautics and Astronautics, 211106, Nanjing, China.(e-mail: xingshuo.han@nuaa.edu.cn, xyhuang81@gmail.com)

  }
}

\markboth{}{}


\maketitle

\begin{abstract}

Physical adversarial patch (PAP) attacks attach carefully crafted patches to physical objects to manipulate a deployed model. 
However, existing PAP attacks suffer from several limitations. First, existing patches remain continuously active, which prevents selective targeting of specific attack objectives and compromises stealth. Second, these approaches require target device access or hardware configuration knowledge, and often rely on costly external equipment.

To address these limitations, this paper introduces \Name, a novel physical adversarial attack strategy that employs a physically static adversarial patch yet can be triggered to produce dynamic and controllable attack effects. 
Unlike existing approaches, \Name can transition between states through predefined triggers, enabling adaptation to dynamic environments.
Moreover, to improve stealth, we design two trigger patterns: one overlapping with the patch and another spatially separated from it.
These triggers can be implemented at low cost without target device access or hardware configuration knowledge.

We make three contributions. First, we provide theoretical and empirical analysis to establish the feasibility of \Name and characterize the number of attack objectives it can support. Second, we develop a gradient-based framework for static yet switchable attacks through diverse trigger patterns. Third, we conduct extensive Unmanned Ground Vehicle (UGV) experiments to validate the effectiveness, transferability, and robustness of \Name.

\end{abstract}

\begin{IEEEkeywords}
Physical adversarial patch, switchable attack objective, trigger mechanism.
\end{IEEEkeywords}

\section{Introduction}

\IEEEPARstart{A}{s} machine-learning systems are increasingly deployed in real-world applications, physical attacks against these systems pose growing security risks. In particular, physical adversarial attacks especially in the form of physical adversarial patches (PAPs) ~\cite{eykholt2018robust} have emerged as an important security concern. These attacks can manipulate model predictions at inference time without access to the training process. Specifically, these patches are carefully crafted perturbations embedded into physical objects, capable of consistently deceiving the target models across diverse environments.  For example, in autonomous driving scenarios, such patches can induce traffic sign misclassifications and erroneous depth estimation, potentially leading to severe consequences.

\label{sec:intro}

\begin{figure}[t]
	\centering
		\begin{minipage}[t]{0.32\linewidth}
			\centering
   \subfloat[Benign]{
			\includegraphics[width=\linewidth,trim={0 0 0 0mm},clip]{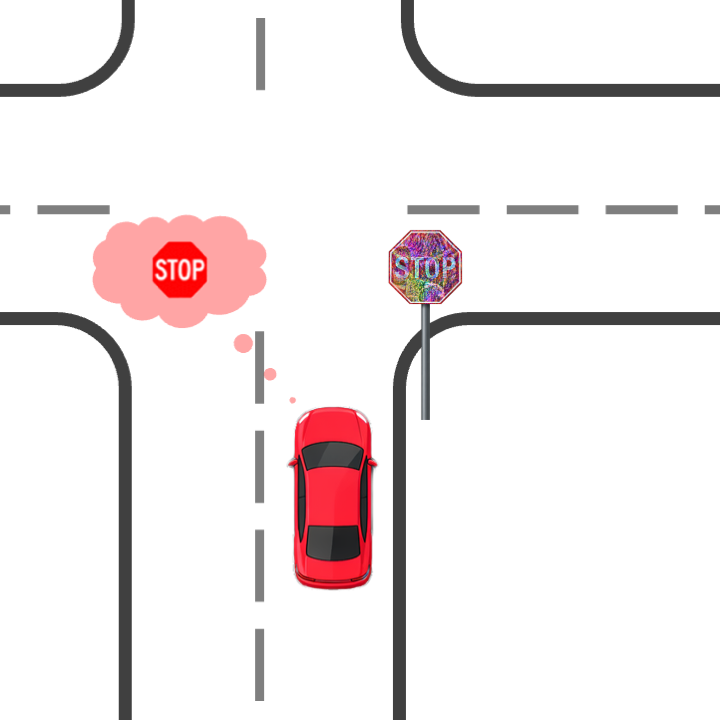}}
		\end{minipage}
		\begin{minipage}[t]{0.32\linewidth}
			\centering
   \subfloat[Overlapping]{
			\includegraphics[width=\linewidth,trim={0 0 0 0mm},clip]{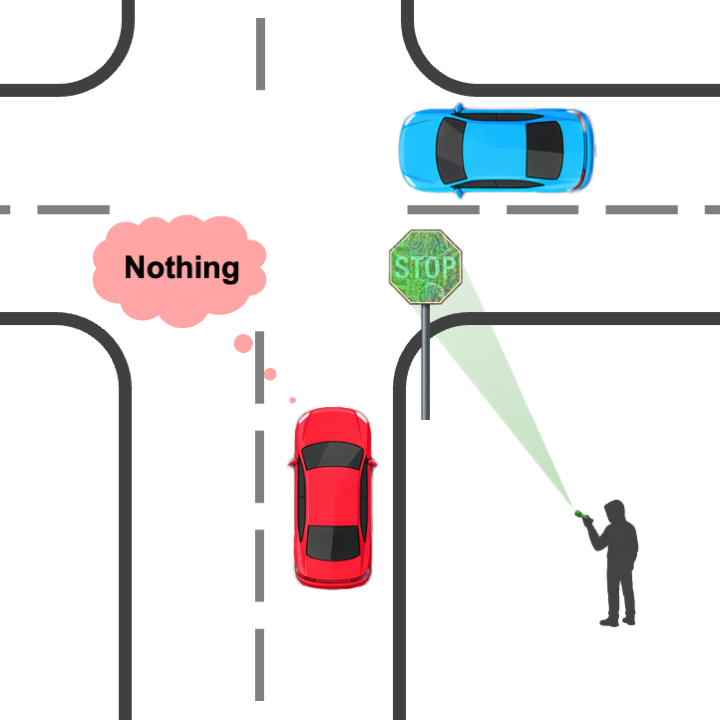}}
		\end{minipage}
		\begin{minipage}[t]{0.32\linewidth}
			\centering
   \subfloat[Separate]{
			\includegraphics[width=\linewidth,trim={0 0 0 0mm},clip]{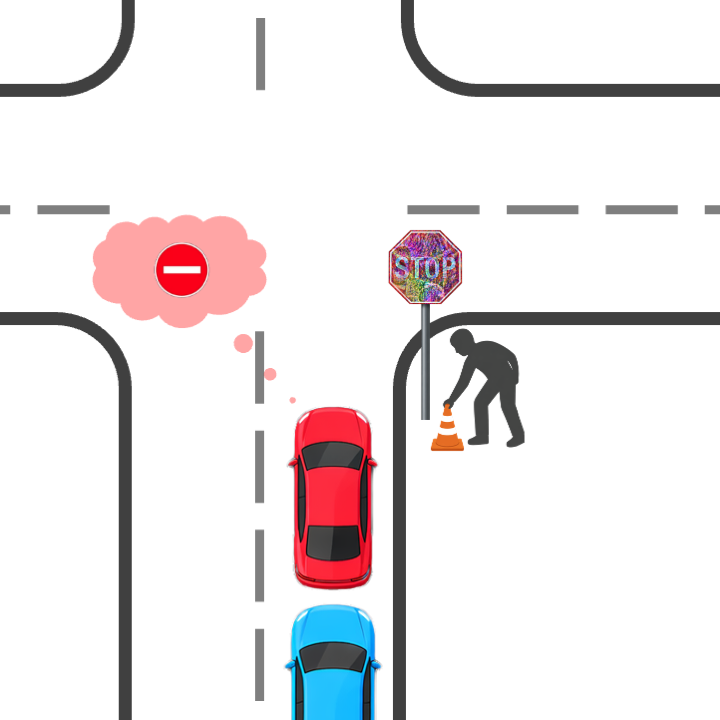}}
		\end{minipage}
\vspace{-5pt}
\caption{(a) \Name is benign to vehicles under normal conditions; (b) \Name causes the hiding attack (HA) when the green light is projected on; (c) \Name causes the misclassification attack (MA). The stop sign is detected as No Passing when a traffic cone is deployed}
\vspace{-20pt}
\label{fig:intro}
\end{figure}

Despite their effectiveness, most existing PAPs suffer from limitations. Once deployed, a patch remains continuously active. In practical settings such as autonomous driving, it may therefore affect any autonomous vehicle (AV) that observes it. This rigid attack model cannot adapt to dynamic real-world scenarios. Moreover, such indiscriminate behavior increases the risk of exposure and detection~\cite{chou2020sentinet, xiang2021patchguard}. To address the issues, Zhu et al. proposed TPatch~\cite{zhu2023tpatch}.
TPatch differs from conventional PAPs by using an acoustic trigger: it remains benign until the trigger disrupts the camera’s stabilization pipeline, causing motion blur and model misclassification. This selective activation enables attacks only under desired conditions, such as when a target AV is present, improving adaptability while reducing detectability.
However, TPatch has practical limitations. First, the adversary must obtain a camera of the same model as the one used by the target vehicle. The adversary also needs detailed information about the camera and sensor. Second, TPatch relies on costly hardware, such as an arbitrary waveform generator, an audio amplifier, and ultrasonic transducers, to generate the acoustic trigger. Third, this trigger generation hardware must be positioned in close proximity to the target.

To address these limitations, we propose \Name, a novel physical adversarial attack strategy. Here we present an example of autonomous driving in Fig. \ref{fig:intro}. By adopting different visual triggers, \Name achieves diverse attack goals under varying traffic conditions.
\Name addresses shortcomings of prior work. 
Unlike prior work, \Name does not require any hardware information about the victim vehicle and supports both white-box and black-box attack settings. Because the triggers are purely visual, \Name does not require expensive equipment. Furthermore, unlike TPatch, \Name does not require a trigger generator in close proximity to the target.

Additionally, \Name also extends the functionality of PAPs in two ways. First, it can adapt to different attack goals in real time. 
\Name enables dynamic behavior using a static patch. 
The patch can transition between states under predefined visual triggers, such as lighting changes, nearby objects, or occlusions. 
In addition to enabling transitions from a benign state to an adversarial state and back to the benign state, it further supports switching among attack goals through different triggers.
Second, it also supports flexible trigger placement through two trigger patterns, which broadens applicable scenarios and can improve stealth.
In the first pattern, the patch and the trigger overlap. For example, a specific lighting condition illuminates the patch and activates the attack. In the second pattern, the patch and the trigger are separated. For example, an object such as a traffic cone is placed near the patch and serves as the trigger.

There are a couple of non-trivial challenges to realizing this PAP framework. Motivated by these challenges, our work delivers the following contributions:

\textbf{(1) Theory: feasibility and capability of dynamic behavior from a static patch.} 
\ul{\textit{Determining whether a single static adversarial patch can reliably support multiple dynamic attack objectives under predefined environmental conditions without degrading attack effectiveness.}}
Existing studies have not investigated the feasibility of inducing dynamic attacks from a static adversarial patch by manipulating environmental conditions. This gap raises two questions: first, whether a single static patch is capable of achieving multiple attack goals under different predefined conditions; and second, how many goals it can support before its effectiveness deteriorates.

We answer these questions with an in-depth theoretical analysis. To begin with, we prove the feasibility of our method. More specifically, by leveraging 
Cooperative Game Theory~\cite{brandenburger2007cooperative}, we prove that the context information can affect the model prediction.
In parallel, by leveraging the Weierstrass Extreme Value Theorem~\cite{martinez2014weierstrass}, we prove that an adversarial patch can achieve different goals by applying different conditions (e.g., colored light). Finally, our analysis reveals a fundamental trade-off: supporting more attack goals makes it harder to find effective perturbations.

\textbf{(2) Method: a static but switchable and visually stealthy PAP.}  
\ul{\textit{Designing a static yet switchable physical adversarial patch that can achieve different attack goals under predefined conditions while remaining practical and stealthy.}}
This challenge involves three key requirements. First, the patch must be physically static, yet able to switch between different attack goals when predefined conditions are met. Second, the method must be cost-effective so that it can be deployed in practice. Third, both the patch and the trigger conditions should appear visually natural and should not attract human attention.

To address these challenges, we develop a gradient-based optimization framework. First, to enable a static yet switchable attack, we design a condition-oriented loss that drives the patch to exhibit different behaviors under predefined conditions while remaining benign otherwise. Second, to support practical deployment, we design multiple visual trigger patterns that can be flexibly instantiated in different real-world scenarios at low cost. Third, to improve visual naturalness, we introduce a joint loss that encourages the patch to resemble the target object, thereby reducing its visual salience.

\textbf{(3) Evaluation: reliable activation and physical-world robustness.}  
\ul{\textit{Ensuring that the patch reliably achieves the intended attack objective under predefined conditions in real-world environments.}}
This challenge involves two key requirements. First, the predefined conditions must consistently activate the intended objective without triggering incorrect behavior or introducing errors. Second, the patch must remain effective under physical-world noise, such as changes in illumination and weather conditions.

To address the challenge, we conduct extensive evaluation and robustness-oriented training. We evaluate \Name on four tasks: 
classification, object detection, monocular depth estimation, and semantic segmentation.
To improve robustness in the physical world, we incorporate color shifting and intensity adjustments during optimization via Expectation over Transformation (EoT)~\cite{eykholt2018robust}. We further validate \Name in both simulation and real-world experiments using an unmanned ground vehicle (UGV). The evaluation includes 3 object detectors, 5 image classifiers for traffic sign recognition, 3 semantic segmentation models, and 4 CNN-based or transformer-based models for monocular depth estimation. We test both white-box and black-box settings to demonstrate robustness and adaptability across attack scenarios.

In summary, this paper advances physical adversarial attacks by presenting, to our knowledge, the first static but switchable PAP that supports multiple attack goals under predefined conditions. We provide a theoretical foundation for \Name and validate its practicality through extensive simulation and real-world experiments.
Notably, \Name represents more than just a patch: it embodies an innovative and versatile attack strategy, which is not restricted by any specific conditions or tasks.

\section{Background}
\label{sec:bg}

\subsection{Physical Adversarial Patch}

Eykholt et al. \cite{eykholt2018robust} proposed $RP_2$ for robust physical adversarial attacks on traffic sign classification. Through graffiti-mimicking patches, their attack causes targeted misclassifications of traffic signs. Similarly, Duan et al. \cite{duan2021adversarial} proposed AdvLB, an adversarial attack method that leverages laser beams as physically realizable adversarial perturbations. Zhao et al. \cite{zhao2019seeing} developed FIR, ERG and nested-AE to enable robust physical adversarial attacks on real-world object detectors, achieving stable attack effectiveness across long distances and wide angles. Ji et al. \cite{ji2021poltergeist} presented Poltergeist acoustic attacks that manipulate inertial sensors to induce image blurs and lead to object misclassifications, while Lovisotto et al. \cite{lovisotto2021slap} proposed SLAP, a novel projector-based adversarial perturbation technique, enabling robust physical attacks in the autonomous driving scenario. Nassi et al. \cite{nassi2020phantom} reveal split-second phantom attacks on commercial ADASs, validating real-world attacks on Tesla and Mobileye 630, and propose GhostBusters, a robust vision-based countermeasure. Additionally, Sato et al. \cite{sato2024invisible} propose an invisible infrared laser reflection attack to mislead CAV traffic sign recognition, and Zhu et al. \cite{zhu2023tpatch} proposed TPatch, an acoustic-triggered physical adversarial patch to attack autonomous driving vision models, verifying its effectiveness on mainstream detectors through simulation and real-world experiments. Beyond classification and detection, Cheng et al. \cite{cheng2022physical} develop a robust physical-object-oriented adversarial patch attack against monocular depth estimation, Zheng et al. \cite{zheng2024pi} further propose $\pi$-Jack, a black-box physical adversarial attack on monocular depth estimation via perspective hijacking, and Liu et al. \cite{liubeware} introduce AdvRM, an adversarial patch attack that physically decouples patches from obstacles via road deployment to mislead autonomous driving MDE models.

\subsection{Threat Model}
\label{sec:threatmodel}


\subsubsection{Attack scenario}

Following prior work on PAP \cite{eykholt2018robust,zhao2019seeing}, we consider an attacker who generates \Name and places it in a digital image or on a physical object. \Name remains benign without the trigger, but activates malicious behavior or switches targets when triggered. We consider two trigger patterns.

\noindent\textbf{Separate Trigger Architecture.}
This pattern uses a contextual cue spatially separated from \Name. We refer to this pattern as a separate trigger because the trigger and the patch usually appear in different image regions. When a predefined context appears, \Name can switch between benign and malicious behaviors or change the attack target.

For example, an attacker may place \Name on a stop sign. Under normal conditions, the patched sign appears benign to passing vehicles. When a separate trigger, such as a nearby traffic cone, appears, the victim model may fail to detect the sign, or label it as another object.

\noindent\textbf{Overlapping Trigger Architecture.}
This pattern uses a physical condition applied directly to \Name, so the trigger and the patch occupy the same image region. Once activated, \Name can switch between benign and malicious behavior or change targets.

For example, an attacker may place \Name on a stop sign and activate it by projecting light of a specific color onto the patch.
This can cause the victim model to miss or misclassify the sign. In depth estimation, \Name attached to an obstacle can use the same trigger to alter perceived depth, making the obstacle appear closer or farther than it actually is.

\subsubsection{Attack goal}
In classification, object detection, and depth estimation, the adversary may pursue five attack goals: (1) misclassification attack (MA), causing a traffic sign classifier or detector to predict the wrong sign; (2) hiding attack (HA), causing a detector to miss the target; (3) appearing attack (AA), causing a classifier to recognize a nonexistent object; (4) far attack (FA), increasing the estimated depth of an obstacle;
and (5) near attack (NA), decreasing the estimated depth of an obstacle.

\subsubsection{Adversary capability}

We assume the adversary has access to the real attack environment and can predefine multiple attack goals that \Name is designed to induce. After generating \Name, the adversary can place it at suitable locations, such as traffic signs, objects, or the road. The adversary then can leverage predefined conditions 
and has access to the triggering context that he is going to use to decide when to activate or switch the attacks according to the actual physical scenarios. 

We consider both white-box and black-box settings. In the white-box setting, the adversary knows the target model architecture, parameters, and training hyperparameters. In the black-box setting, the adversary has no knowledge of the victim model. We further assume the adversary has no information about the camera used by the victim system, including its brand, resolution, and lens.

\section{Theoretical Analysis}

\label{sec:methodology}

In our framework, we design two trigger patterns, the overlapping trigger and the separate trigger architectures.

\subsection{Separate Trigger Architecture}
\subsubsection{Problem Formulation}

We consider a normal victim object $x$. The attacker aims to generate the corresponding \Name and blend it with $x$. We denote $\mathcal{X}$ as the space of input images, $\mathcal{Y}$ as the output space of the target model $f: \mathcal{X} \rightarrow \mathcal{Y}$. An adversarial object $x^{'}$ is defined as: 
\begin{equation}
    x' := (1-r) \cdot x + r\cdot \delta,
\end{equation}

where $\delta$ is the perturbation applied to the victim object $x$, and $r\in(0,1]$ is the blend ratio. Formally, an effective \Name is defined as follows:
\begin{equation}
\left\{
\begin{array}{l}
    f(x') = y_1 \\
    f(x' \oplus (c, P)) = y_2
\end{array}
\right.
\quad \text{s.t. } y_1 \neq y_2
\label{eq:ctae_formulation}
\end{equation}
$\oplus$ stands for the operation of placing the triggering context $c$ at the position $P$, which can be either physical or digital. $y_1$, $y_2$ represent two distinct model outputs. 
The goal is to find the optimal perturbation $\delta$ to make the model output them with and without the triggering context, respectively.

Inspired by~\cite{wangunified}, which exploits the cooperative game theory to analyze the in-group interaction of the conventional AE, we propose modeling \Name as a cooperative game consisting of two distinct player groups: the adversarial unit group $\mathcal{A}=\{a_1,...,a_n\}$ and the context feature group $\mathcal{C}=\{c_1,...,c_m\}$. 
The elements in each group comprise combinations of pixels in \Name and the context object respectively. 
The reward of the game $v(\cdot)$, on the other hand, can be regarded as the loss of the model output w.r.t. the target output $y_t$. 
In this case, the interaction index between the adversarial group and context feature group can be formally written as:
\[
\begin{array}{cl}
    I_{cross} & = \mathbb{E}_{ij}[I_{a_i c_j}]\\
    & = \frac{1}{mn}\sum_{i=0}^n\sum_{j=0}^m\big\{\phi(S_{ij}|\mathcal{P}) \\
    & -[\phi(a_i|\mathcal{P}\backslash\{c_j\})+ \phi(c_j|\mathcal{P}\backslash\{a_i\})]\big\},
    \label{eq:Icross}
\end{array}
\]
where $S_{ij}=\{a_i\}\cup\{c_j\}$, $\mathcal{P}=\mathcal{A}\cup\mathcal{C}$. 
Based on the equations in Appendix~\ref{app:sta},
the cross interaction index can be further simplified into the following form:
\begin{equation}
I_{cross}'= \mathbb{E}_{ij}[\phi_{a_i, w/ c_j}-\phi_{a_i,w/ o.c_j}]
\end{equation}
The Shapley Interaction Index is therefore differentiable and can be directly maximized by gradient descent. 

\subsubsection{Feasibility Proof}
\label{sec:proof2}

In this work, we mainly focus on attacking end-to-end deep learning models~\cite{glasmachers2017limits, silver2017predictron}. These models are designed to process raw input data and produce a final output prediction by applying gradient-based learning as a whole, without requiring any intermediate steps, hand-crafted features, or separate processing modules. Their formal definition is given below: 
\begin{definition}
\label{def:1}
\textit{ (End-to-end model) Let $f: \mathcal{X} \rightarrow \mathcal{Y}$ be a continuous projection. For $\forall{x}\subseteq \mathcal{X}$, $y=f(x)$. $f$ is an end-to-end model if and only if the Jacobian matrix $\frac{\partial{y}}{\partial{x}}$ exists}.
\end{definition}

Definition~\ref{def:1} illustrates the end-to-end learning from the aspect of the gradient. The existence of the Jacobian matrix $\frac{\partial{y}}{\partial{x}}$ to any part of the input ensures the entire model can be trained by utilizing the gradient-based learning. It also guarantees some special characteristics of end-to-end models, especially that the prediction to one part of the input may be influenced by the other:

\begin{proposition}  \textit{Let $f: \mathcal{X} \rightarrow \mathcal{Y}$ be an end-to-end nonlinear model. $x$ is a given object in the input. Let $\mathcal{S}_{x} = \{ i | x_i \neq 0 \}$ be the player set of $x$,  $\hat{x}$ is the unit from different parts of the image such that $\mathcal{S}_{x} \cap\ \mathcal{S}_{\hat{x}}= \emptyset$. Then $\exists{\hat{x}} $ s.t. $ \exists{p} \in \mathcal{S}_{x}$ and $ \exists{q} \in \mathcal{S}_{\hat{x}}$, we have $|I_{pq}| > 0$.} 
\label{prop:1}
\end{proposition}

\renewcommand{\qedsymbol}{}
\begin{proof}
We assume the contrary of this proposition and derive a contradiction.
Assume that $\forall{\hat{x}} \in \mathcal{X}$, and  $ \forall{p} \in \mathcal{S}_{x}$ and $ \forall{q} \in \mathcal{S}_{\hat{x}}$,  $|I_{pq}|=0$. 
Without loss of generality, we consider a single player unit $\hat{x} = \delta_q$ that $\mathcal{S}_{\hat{x}} = \{q\}$. The Shapley Interaction Index is:
$$I_{pq}=\mathbb{E}_{pq}[\phi_{x_q, w/ x_p}-\phi_{x_q,w/ o.x_p}]$$

Let $k=\frac{|S|!(n-|S|-2)!}{(n-1)!}$, where,

\begin{align*}
    \phi_{x_q, w/ x_p} &=\sum_{S\subseteq\Omega\backslash\{x_p,x_q\}}{{k\cdot(v(S\cup\{x_p,x_q\})-v(S\cup\{x_p\}))} }\\
                                &= \sum_{S\subseteq\Omega\backslash\{x_p,x_q\}}{{k\cdot(\nabla_{x_q}{f(S\cup{x_p})})}\delta_q}
\end{align*}

Similarly,
\begin{align*}
    \phi_{x_q, w/ o. x_p}  &=\sum_{S\subseteq\Omega\backslash\{x_q\}}{{k\cdot(v(S\cup\{x_q\})-v(S))} }\\
                                    &= \sum_{S\subseteq\Omega\backslash\{x_q\}}{k\cdot(\nabla_{x_q}{f(S)})\delta_q}  \\
                                    &= \sum_{S\subseteq\Omega\backslash\{x_p,x_q\}}{k\cdot(\nabla_{x_q}{f(S)})\delta_q}+ \\ 
                                    & \sum_{S\subseteq\Omega\backslash\{x_p,x_q\}}{k\cdot(\nabla_{x_q}{f(S\cup x_p)})\delta_q} \;(as\:x_p\cup x_q=\emptyset)\\
\end{align*}
As we assume $|I_{pq}|=0$, we have:
\begin{equation}
    I_{pq} = \sum_{S\subseteq\{\Omega\}\backslash\{x_p, x_q\}}{k\cdot(\nabla_{x_q}{f(S)})\delta_q} = 0
\label{eq:proof1}
\end{equation}
Given the model $f$ is nonlinear, $\exists{\nabla_{x_q}{f(S)}}\neq\mathbf{0}$ ($\nabla_{x_q}{f(S)}$ varies with different values of $S$ according to the nonlinear conditions). On the other hand,~\Eref{eq:proof1} holds for $\forall{\delta_q}\in\mathbb{R}$.  Therefore, $\nabla_{x_q}{f(S)}=\mathbf{0}, \forall{S}$, which contradicts the definition of the nonlinear model.
\end{proof}

\noindent Proposition~\ref{prop:1} indicates that \textbf{for any end-to-end model, the context information can also contribute to the final prediction of the model}. This property renders the feasibility of \Name, as demonstrated in the following proposition. For easy presentation, we assume $f$ is a multi-label classification model. The proposition can be extended to other types of models as well. Based on Proposition~\ref{prop:1}, the solution of \Name can be found given some special constraints about the model, as narrated in Proposition~\ref{prop:2}.

\begin{proposition} (See proof in Appendix~\ref{app:prop2})
\textit{Let $f: \mathcal{X} \rightarrow \mathcal{Y}$ be a well-trained end-to-end model that $f_y(x)>\beta$ and $f_y(x+c)>\beta$,  $\beta\in\mathbb{R}^+$ stands for the classification threshold. $\delta_c$ represents the adversarial pixels of \Name using the single-step gradient information, given as $\delta_c = \alpha \nabla_{x}(\mathcal{L}(x+c) - \mathcal{L}(x))$. $c$ is the triggering context. The 
perturbation budget is $\alpha$. If $\|\nabla_{x}{\mathcal{L}(x+c)}\|^2_2 \ge\frac{-\log{\beta}}{\alpha}$ and $\nabla_{x}^T{\mathcal{L}(x+c)}\nabla_{x}{\mathcal{L}(x)}\le0$ then $\exists{\delta_c}$ that satisfies~\Eref{eq:ctae_formulation}}.
\label{prop:2}
\end{proposition}  

Proposition~\ref{prop:2} indicates that, although the end-to-end nature of the model ensures that predictions can be influenced by pixels from different regions of the input, achieving a \Name attack requires additional considerations. The prerequisite given also varies the performance of \Name on different models.

\subsection{Overlapping Trigger Architecture}

\subsubsection{Problem Formulation}

Let $X$ denote the input image space, and $Y$ denote the output of the target model $f: X \rightarrow Y$. An adversarial patch $x'$ is defined as $x + \delta$, where $\delta$ is the perturbation applied to $x$. The patch $x'$ should satisfy the following objectives: 

\begin{packeditemize}

\item Under normal conditions, the patch does not exhibit an adversarial effect:
$$f(x + \delta) = f(x) = y $$

\item The patch is activated to be malicious under certain pre-defined conditions. Specifically, the attacker establishes some pairs of $(cl_k, y_k)$, where $cl_k$ is a special condition and $y_k$ is the goal the attacker wants to induce. This is formulated as:
$$f(x + \delta + cl_k) = y_k $$

\item Stealthiness: $x'$ should be close enough to $x$ to evade human inspections, i.e.,
$$|| x' - x ||_{p} \le \epsilon $$

\end{packeditemize}

\subsubsection{Feasibility Proof}
\label{sec:proof}

We denote the solution space for each attack goal $y_k$ as $S_k$. To find a perturbation $\delta^*$ that satisfies all the attack goals, the following condition must be met:
\[
\delta \in S = X_\epsilon \cap S_1 \cap S_2 \cap \dots \cap S_N
\]
where $X_\epsilon = \{ \delta \mid \|\delta\|_p \leq \epsilon \}$ represents the set of perturbations constrained by $\epsilon$.
This problem can also be described as a constrained optimization problem:
\begin{align}
\label{eq:general_for}
\delta^* = \arg\min_{\delta \in \Delta} \left( \sum_{k=1}^{N} \mathcal{L}(f(x + \delta + cl_k), y_k) \right) + \mathcal{L}(f(x + \delta), y)
\end{align}
where $\Delta = \{ \delta \mid \|\delta\|_p \leq \epsilon \}$ is the set of perturbations that satisfy the $L_p$-norm constraint, and $\mathcal{L}$ is a loss function.

Based on the  Weierstrass Extreme Value Theorem~\cite{martinez2014weierstrass}, any continuous function on a compact set must attain a maximum and minimum value. To apply this theorem, we assume that the loss functions $\mathcal{L}(f(x + \delta), y)$ and $\mathcal{L}(f(x + \delta + cl_k), y_k)$ are continuous. Additionally, the constraint set $\Delta = \{ \delta \mid \|\delta\|_p \leq \epsilon \}$ is compact, as it is both bounded and closed. Therefore, there must exist an optimal solution $\delta^*$ within the set $\Delta$ that satisfies the objective function.

Due to the non-convex nature of the neural network loss function $\mathcal{L}$, directly solving the problem to obtain a global optimum may not always be possible. However, finding a local optimum is still meaningful in the context of adversarial attacks, where a satisfactory solution is often sufficient. We analyze the local optima using the Karush-Kuhn-Tucker (KKT) condition. 
We define the Lagrange function as:
\[
\mathcal{L}(\delta, \lambda) = \sum_{k=1}^{N} \mathcal{L}(f(x + \delta + cl_k), y_k) + \mathcal{L}(f(x + \delta), y) + \lambda (\|\delta\|_p - \epsilon)
\]
According to the KKT conditions, there exists a multiplier $\lambda \geq 0$ such that:
\[
\nabla_{\delta} \mathcal{L}(\delta, \lambda) = 0, \quad \|\delta\|_p \leq \epsilon, \quad \lambda (\|\delta\|_p - \epsilon) = 0
\]
These conditions ensure that even if the global solution is not achievable, a local optimum $\delta^*$ that meets the KKT conditions can still be found, providing a practical solution.


\begin{figure}[t]
    \centering
    \includegraphics[width=\linewidth]{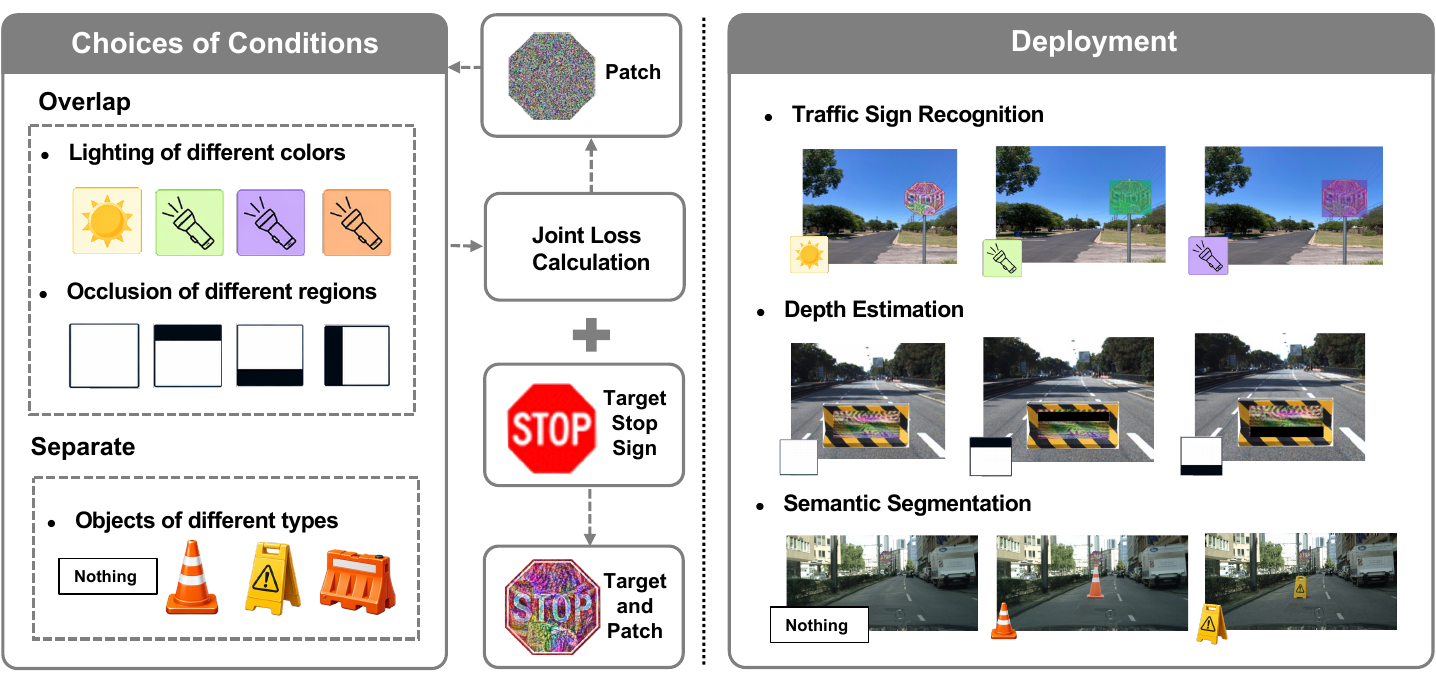}
    \caption{Overview of \Name. It presents a novel attack strategy as it can be flexibly extended to more pre-defined conditions and applied to more tasks. }
    \label{fig:overview}
    \vspace{-18pt}
\end{figure}

\subsection{Capacity of \Name}
Intuitively, \Name is able to achieve an arbitrary number of attack goals under distinct pre-defined conditions, and its solution space is smaller than that of conventional PAPs for one fixed attack goal. We demonstrate that it is difficult for a stationary PAP to achieve unlimited attack goals: as the number of attack goals increases, the generation of \Name becomes increasingly difficult. The details are described in Appendix~\ref{app:cp}.

\section{Methodology}
We present the concrete methodology of generating and deploying \Name against different machine learning systems. 
Fig.~\ref{fig:overview} illustrates the overview of \Name. 

\subsection{General Formulation}
Across all tasks and trigger types, the generation of the adversarial perturbation $x_{adv}$ follows a similar optimization framework, as described in ~\Eref{eq:general_for}. Task-specific formulations are detailed in Appendix~\ref{app:fdt}.

\subsection{Loss Design}
\subsubsection{Separate Trigger }
\label{Sec:Meth}

Different tasks may require different formats of the adversarial 
loss function, as described below.

\begin{packeditemize}

\item \textbf{Semantic Segmentation.}
A semantic segmentation model performs per-pixel classification 
for an input image $x$, outputting a pixel-wise category map 
$Y$ of size $H \times W$. Therefore, for each pixel, we have:
\begin{equation}
\begin{array}{rr}
\mathcal{L}^{(i,j)} = \sum_{i,j}^{H,W}\big(\mathcal{L}_1^{(i,j)}(f(x+\delta), y_1) \\ + \lambda \cdot \mathcal{L}_2^{(i,j)}(f(x+\delta \oplus c), y_2)\big) \cdot \mathcal{W}^{(i,j)}.
\end{array}
\end{equation}

\noindent where $\mathcal{W}^{(i,j)}$ is the pixel-wise weight mask 
(0.9 for the target region, 0.2 for the rest).

\item \textbf{Multi-Label Learning.}
A multi-label learning model outputs a prediction vector 
$Y \in \{0,1\}^C$ for $C$ candidate categories, supporting 
AA, HA, and MA objectives. The loss is defined as:
\begin{equation}
\mathcal{L} = \sum_{i=1}^{C} w_{i}^{(k)} \cdot 
Y_{i}^{(k)} \log \left( \sigma \big( f(x' \oplus cl_k)_i 
\big) \right)
\label{eq:mll_loss}
\end{equation}

\noindent where $w_{i}^{(k)}$ is the class weight, $f(\cdot)_i$ is the 
logit of the $i$-th category, and $\sigma(\cdot)$ is the 
sigmoid activation. For HA, we set $Y_i = 0$ for the 
target category.

\item \textbf{Object Detection.}
An object detection model outputs $\hat{y} = \{y_{loc}, y_{size}, C\}$. The attacker can choose HA or MA, and the loss $\mathcal{L}$ is represented as:
\begin{equation}
\mathcal{L} =
\left\{
\begin{matrix}
-l_{obj}, & \text{if HA} \\[3pt]
l_{obj} + l_{cls} + l_{box}, & \text{otherwise}
\end{matrix}
\right.
\label{eq:det_loss}
\end{equation}

\noindent where $l_{obj}$, $l_{cls}$, $l_{box}$ are the objectness, 
classification, and bounding box regression losses, and 
$h$ denotes the index of the HA goal.

\item \textbf{Self-Balance Optimization.}
To address unbalanced optimization between dual attack 
objectives, we dynamically adjust the balance weight per 
iteration. The loss at iteration $t$ is:
\begin{equation}
\mathcal{L}^{t} = \mathcal{L}_{1}^{t}\big(f(x_{t}' \oplus c), y_{1}\big)
+ \left(\frac{\mathcal{L}_{2}^{t-1}}
{\mathcal{L}_{1}^{t-1}}\right)^{\!\gamma / \sqrt{t}} 
\cdot \mathcal{L}_{2}^{t}\big(f(x_{t}'), y_{2}\big)
\label{eq:self_balance}
\end{equation}

\noindent where $\mathcal{L}_{1}^{t}$, $\mathcal{L}_{2}^{t}$ are loss 
terms at iteration $t$, $\mathcal{L}_{1}^{t-1}$, 
$\mathcal{L}_{2}^{t-1}$ are values from the previous 
iteration, $\gamma$ is the scaling hyper-parameter, 
$x_{t}' = x + \delta_{t}$, and $\sqrt{t}$ is the decay 
term to prevent optimization vibration.

\end{packeditemize}

\subsubsection{Overlapping Trigger }

To enable attack goal switching in real time, we design a new loss function for optimizing  \Name, as shown below:
\begin{equation}
\begin{split}
    \underset{\Name}{\mathrm{argmin}}  \mathbb{E}_{x\sim X}  \mathcal{L}_{no} + \sum_{ k=1 }^{N} w_k * \mathcal{L}_{cl}^{k} + \mathcal{L}_{en}
\end{split} 
\label{eq:objective_design}
\end{equation}
where $\mathcal{L}_{no}$ is the normal loss that makes \Name achieve the benign effects without projections and $\mathcal{L}_{cl}^{k}$ is the $k_{th}$ adversarial loss in $N$ conditions. $\mathcal{L}_{en}$ is the enhancement loss for improving the stealthiness and robustness, which will be detailed in the following sections.
The detailed optimization process is described in Algorithm 3 in Appendix~\ref{algorithm23}. 

Different tasks may require different formats of the adversarial loss function $\mathcal{L}_{cl}^{k}$ in Eq~\ref{eq:objective_design}, as described below. 

\begin{packeditemize}

\item\textbf{Object Classification.}
 The attacker only considers MA for the classification task. Hence, the goals are set as $G_s = {MA_1; ...; MA_N}$, where they all can be optimized with the cross-entropy loss:
\begin{equation}
\mathcal{L}_{cl}^{k} = CELoss(f(x^{'} + cl_k), y_k)
\label{eq:classification_objective_design}
\end{equation}

\item\textbf{Object Detection.}
The attacker can choose HA or MA to target the object detection task. There could be two strategies to establish the goal set: (1) the attacker can adopt the same set as classification: $G_{s1} = \{MA_1; ...; MA_N\}$. (2) The attacker can combine HA and MA in the goal set: $G_{s2} = \{HA; MA_1; ...; MA_{N-1}\}$. Correspondingly, the loss term $\mathcal{L}_{cl}^{k}$ can be represented as:
\begin{equation}
\mathcal{L}_{cl}^{k}=\left\{\begin{matrix} \mathcal{L}_{HA} + CELoss(f(x^{'} + cl_k), y_k), & \text{if $G_{s2}$}\\
CELoss(f(x^{'} + cl_k), y_k), & \text{otherwise}
\end{matrix}\right.
\label{eq:detection_objective_design}
\end{equation}
where $\mathcal{L}_{HA}$ is the HA loss, used in prior works~\cite{song2018physical, zhu2023tpatch}.

\item\textbf{Depth Estimation.}
The attacker can choose two attack goals: $G_{s} = \{FA;NA\}$.
The loss term $\mathcal{L}_{cl}^{k}$ can be represented as:
\begin{equation}
\mathcal{L}_{cl}^{k} = CELoss(f(x^{'} + cl_k), D_k)
\label{eq:de_objective_design}
\end{equation}

\end{packeditemize}

Robustness is enhanced by applying Expectation over Transformation (EoT). Stealthiness is improved through PGD with an $L_{\infty}$ constraint and three losses (content, smoothness, and photorealism regularization). Physical-world adaptation is achieved via context augmentation, EoT on both the perturbation and background, and a total variation loss that smooths the patch. Further details are provided in Appendix~\ref{app:dresepwa}.

Algorithm 1
in Appendix~\ref{algorithm23}
describes the overall process of generating \Name. The details of the algorithms of overlapping trigger architecture and separate trigger architecture are also in Appendix~\ref{algorithm23}.

\section{Experimental Evaluation}

\subsection{Experimental Setup}

\noindent \textbf{Evaluation Metrics.} We consider the following metrics.

(1) \textit{PIoU} (\textit{Partial Intersection over Union}). IoU (Intersection over Union) is widely used to evaluate object detection. However, IoU is not able to accurately reflect the attack effectiveness, as it focuses on the prediction accuracy of the entire
image. We therefore use PIoU, which computes IoU restricted to the attacker-specified region $S_a$: 
\begin{equation}
\mathrm{PIoU}=\frac{\left(S_t\cap S_p\right)\cap S_a}{\left(S_t\cup S_p\right)\cap S_a},
\end{equation}
where $S_t$ and $S_p$ denote the ground-truth and predicted object regions, respectively.

(2) \textit{Benign Accuracy (BA).} This represents the performance of \Name on the patched validation set under normal conditions. We use Accuracy and Mean Average Precision (mAP) for traffic sign classification and detection, respectively. For depth estimation, we define the correct prediction as within 5\% of the ground-truth distance of the target obstacle. 

(3) \textit{Attack Success Rate (ASR).} We measure attack performance using ASR:
\begin{equation}
    \mathrm{ASR}=\frac{1}{N}\sum_{i=1}^{N}\mathbb{I}\!\left(f(x'_i\oplus c_i)=y_2 \ \wedge\  f(x'_i)=y_1\right),
\label{eq: all ASR}
\end{equation}
where $N$ is the number of tested samples and $\mathbb{I}(\cdot)$ is the indicator function. \Eref{eq: all ASR} counts an attack as successful if and only if \Name simultaneously (i) induces the target effect under the predefined triggering context and (ii) preserves benign behavior in the absence of the context. 
For semantic segmentation, we further require a PIoU change of at least $0.5$ in the target region, and the predicted labels under both settings (with and without the triggering context) must match the preset goals.

(4) \textit{w. ASR and w/o. ASR.} This metric is used to analyze how \Name performs towards the particular attack goals. `w. ASR' and `w/o. ASR' stand for the attack success rate when the image is with and without the triggering context, respectively. Formally,
\begin{equation}
\left\{
\begin{array}{l}
    \mathrm{w.} \space \mathrm{ASR} = \frac{2\cdot\mathbb{I}(f(x'\oplus c)=y_2)}{N}, \\
    \mathrm{w/o.} \space \mathrm{ASR} = \frac{2\cdot\mathbb{I}(f(x')=y_1)}{N}.
\end{array}
\right.
\end{equation}

(5) \textit{Goal-i ASR ($G_i$-ASR).} This means that an attack goal can be achieved with its corresponding condition while maintaining benign results in the absence of conditions. Such a metric is helpful for us to analyze the performance.

\noindent\textbf{Thresholds.} For the detection model, the IoU threshold in mAP is set to 0.5. For depth estimation, the threshold of prediction error is set to 14\%, which is the default setting in~\cite{liubeware}. For a ground-truth depth of 10 meters, this threshold value means that a prediction error of at least 1.4 ($10\times14\%$) meters is considered a successful attack. We also test other threshold values for depth estimation in the ablation study.

\noindent\textbf{Models and Datasets.} We evaluated the following models: YOLOv3~\cite{redmon2018yolov3}, YOLOv5~\cite{yolov5}, EfficientDet~\cite{tan2020efficientdet}, DeepLabV3~\cite{chen2017rethinking}, SegFormer~\cite{xie2021segformer}, SETR~\cite{zheng2021rethinking}, GCN decoders, VOC2007~\cite{pascal-voc-2007}, Faster R-CNN, VGG-13/16, ResNet-50/101, MobileNetV2, Mono2~\cite{mono2}, ManyDepth~\cite{manydepth}, MiDaS~\cite{midas}, DepthAnything~\cite{depthanything}.
We employed the following datasets: MSCOCO17~\cite{lin2014microsoft}, BDD-100K~\cite{yu2020bdd100k}, Cityscapes~\cite{cordts2016cityscapes}, MS COCO~\cite{COCO}, GTSRB~\cite{stallkamp2012man}, KITTI~\cite{Geiger2012CVPR}.
Detailed experimental configurations are provided in Appendix~\ref{app:es}.

\subsection{Dataset Evaluation Results}
\subsubsection{Effectiveness}\hfill

\textbf{Separate Trigger}

We consider three attack scenarios in separate trigger experiments. Scenario 1: the attack is dormant normally and activates only when a trigger appears.
Scenario 2: the attack is always active, with the trigger switching objectives.
Scenario 3: the attack is normally active but deactivated by the trigger.


\begin{figure*}[t]
    \centering
    \includegraphics[width=\linewidth]{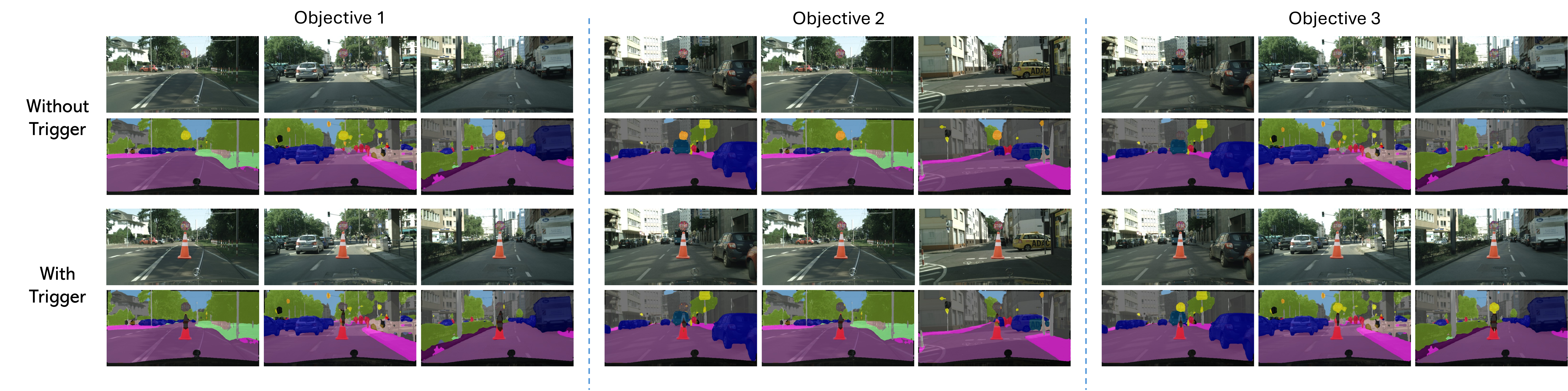}
    \vspace{-20pt}
    \caption{\textbf{Visualization of separate trigger on semantic segmentation.} The upper row shows the segmentation results without triggering context. The lower row stands for the opposite situation. The \textcolor{yellow}{yellow} zone in the segmentation output stands for the label  `\textcolor{yellow}{Traffic Sign}', which is the ground truth label of the victim object. The \textcolor{gray}{gray} zone represents `\textcolor{gray}{building}', while the \textcolor{orange2}{orange} region means `\textcolor{orange2}{traffic light}'. The labels shift perfectly in these non-cherry-picked examples.}
    \label{fig:seg_visual}
\end{figure*}

\textit{Semantic Segmentation.}
Semantic segmentation models are more vulnerable to separate triggers than other tasks. As shown in~\Tref{tab:T_SEG}, the ASRs in all three attack scenarios exceed 80\%, even in Scenario 2, where performance drops substantially in object detection and multi-label learning. This is because segmentation predicts the label of each pixel based on surrounding pixels, so the model tends to extract and exploit more contextual information. 
\Fref{fig:seg_visual} shows visualization results.
We also evaluate the effectiveness of the label weight mask $\mathcal{W}$. Results are in Appendix~\ref{app：rst}.

\begin{table}[h]
    \centering
    \caption{Attack results on semantic segmentation.}
    \vspace{-5pt}
    \resizebox{\linewidth}{!}{\begin{tabular}{ccccccccc}
    \toprule
       \multirow{2}{*}{Model} & \multirow{2}{*}{Dataset} & \multirow{2}{*}{Condition} & \multicolumn{2}{c}{Scenario 1} & \multicolumn{2}{c}{Scenario 2} & \multicolumn{2}{c}{Scenario 3} \\ \cmidrule{4-9}
        & & & ASR & PIoU & ASR & PIoU & ASR & PIoU \\ \midrule
        \multirow{6}{*}{DeepLabV3} & \multirow{3}{*}{Cityscapes} & w. & 93.2 & 0.862 & 91.7 & 0.821 & 94.5 & 0.874 \\
         & & w/o. & 94.1 & 0.867 & 93.9 & 0.862 & 93.5 & 0.869 \\
         & & w. $\wedge$ w/o. & 92.1 & 0.847 & 90.7 & 0.817 & 93.0 & 0.856 \\ \cmidrule{2-9}
         & \multirow{3}{*}{BDD-100k} & w. & 97.9 & 0.871 & 95.2 & 0.862 & 98.1 & 0.877 \\
         & & w/o. & 98.7 & 0.884 & 96.1 & 0.872 & 96.5 & 0.863 \\
         & & w. $\wedge$ w/o. & 95.2 & 0.848 & 95.9 & 0.847 & 96.1 & 0.871 \\ \midrule
        \multirow{6}{*}{SegFormer} & \multirow{3}{*}{Cityscapes} & w. & 95.2 & 0.843 & 90.7 & 0.812 & 94.6 & 0.839 \\ 
         & & w/o. & 93.4 & 0.828 & 89.6 & 0.797 & 94.3 & 0.833 \\ 
         & & w. $\wedge$ w/o. & 90.2 & 0.791 & 88.7 & 0.750 & 92.8 & 0.801 \\ \cmidrule{2-9}
         & \multirow{3}{*}{BDD-100k} & w. & 92.2 & 0.833 & 93.7 & 0.874 & 95.9 & 0.839 \\
         & & w/o. & 94.8 & 0.878 & 95.5 & 0.841 & 97.2 & 0.874\\
         & & w. $\wedge$ w/o. & 92.0 & 0.806 & 92.4 & 0.809 & 95.1 & 0.844 \\ \midrule
        \multirow{6}{*}{SETR} & \multirow{3}{*}{Cityscapes} & w. & 92.0 & 0.803 & 88.4 & 0.773 & 90.6 & 0.791 \\ 
         & & w/o. & 91.7 & 0.798 & 89.7 & 0.788 & 94.9 & 0.827\\
         & & w. $\wedge$ w/o. & 91.0 & 0.791 & 89.4 & 0.781 & 92.7 & 0.811 \\ \cmidrule{2-9}
         & \multirow{3}{*}{BDD-100k} & w. & 94.3 & 0.821 & 90.5 & 0.791 & 92.3 & 0.804\\
         & & w/o. & 94.4 & 0.813 & 89.4 & 0.782 & 93.2 & 0.813 \\
          & & w. $\wedge$ w/o. & 92.7 & 0.818 & 88.2 & 0.741 & 94.4 & 0.820\\
         \bottomrule
    \end{tabular}}
    \label{tab:T_SEG}
\end{table}


\textit{Object Detection.}
The results are shown in~\Tref{tab:T_OD}. The ASRs of Scenario 1 and Scenario 3 are much higher, while in Scenario 2, it is harder to make the separate trigger have multiple attack goals in most cases. 
Another finding is that the attack performance of the separate trigger on BDD-100k is generally better than that on MSCOCO. There are two possible reasons: (1) BDD-100k is a less challenging dataset for object detection.
(2) The data distribution of BDD-100k is much simpler than MSCOCO.

\begin{table}[h]
    \centering
    \caption{Attack results on object detection.}
    \vspace{-5pt}
    \resizebox{\linewidth}{!}{\begin{tabular}{ccccccccc}
    \toprule
       \multirow{2}{*}{Model} & \multirow{2}{*}{Dataset} & \multirow{2}{*}{Condition} & \multicolumn{2}{c}{Scenario 1} & \multicolumn{2}{c}{Scenario 2} & \multicolumn{2}{c}{Scenario 3} \\ \cmidrule{4-9}
        & & & ASR & PIoU & ASR & PIoU & ASR & PIoU \\ \midrule
        \multirow{6}{*}{YOLOv3} & \multirow{3}{*}{MSCOCO} & w. & 87.8 & 0.142 & 43.6 & 0.374 & 94.5 & 0.911 \\ 
         & & w/o.  & 92.5 & 0.898 & 85.8 & 0.101 & 88.7 & 0.131 \\
         & & w. $\wedge$ w/o. & 85.4 & - & 41.6 & - & 84.7 & - \\ \cmidrule{2-9}
         & \multirow{3}{*}{BDD-100k} & w. & 91.9 & 0.127 & 53.4 & 0.452 & 94.3 & 0.877 \\
         & & w/o. & 92.1 & 0.794  & 91.5 & 0.073 &  87.8 & 0.152 \\
         & & w. $\wedge$ w/o. & 88.0 & - & 48.9 & - & 86.1 & - \\ \midrule
        \multirow{6}{*}{YOLOv5} & \multirow{3}{*}{MSCOCO} & w. & 90.5 & 0.073 & 56.4 & 0.451 & 93.2 & 0.898\\  
         & & w/o.& 91.3 & 0.875 & 91.4 & 0.097 & 93.8 & 0.103 \\ 
         & & w. $\wedge$ w/o. & 87.9 & - & 52.1 & - & 90.5 & -\\ \cmidrule{2-9} 
         & \multirow{3}{*}{BDD-100k} & w. & 89.1 & 0.105 & 53.1 & 0.471 & 91.2 & 0.828  \\
         & & w/o.  & 93.5 & 0.880 & 91.4 & 0.085 & 88.7 & 0.101 \\
         & & w. $\wedge$ w/o. & 87.3 & - & 47.7 & - & 87.2 & -\\ \midrule
        \multirow{6}{*}{EfficientDet} & \multirow{3}{*}{MSCOCO} & w. & 70.7 & 0.253 & 55.7 & 0.500 & 89.0 & 0.867 \\
         & & w/o. & 91.1 & 0.889 & 79.4 & 0.173
 & 73.3 & 0.225 \\ 
         & & w. $\wedge$ w/o. & 69.9 & - & 54.0 & - & 72.0 & - \\ \cmidrule{2-9}
         & \multirow{3}{*}{BDD-100k} & w. & 87.4 & 0.105 & 59.1 & 0.511  & 89.4 & 0.804 \\
         & & w/o.  & 94.4 & 0.914 & 92.7 & 0.103 & 91.3 & 0.087 \\ 
         & & w. $\wedge$ w/o. & 84.3 & - & 54.9 & - & 86.7 & -\\
         \midrule
    \end{tabular}}
    \label{tab:T_OD}
\end{table}

\textit{Multi-label Classification.}
\label{subsec: multilabel_perf}
\Tref{tab:T_MC} reports results on multi-label classification. In Scenario 1, ASRs in both conditions exceed 80\%, since switching from benign to a hiding attack is relatively easy. In Scenario 2, misclassification ASR is relatively low, likely because harder attacks require more budget.
Combining multiple attack goals in one perturbation further degrades performance. Notably, the separate trigger depends strongly on background/context.

\begin{table}[h]
    \centering
    \caption{Attack results on multi-label classification.}
    \vspace{-5pt}
    \resizebox{0.85\linewidth}{!}{
    \begin{tabular}{cccccc}
    \toprule
       Model & Dataset & Condition & Scenario 1 & Scenario 2 & Scenario 3 \\ \midrule
        \multirow{6}{*}{ML-GCN} & \multirow{3}{*}{MSCOCO} & w. & 92.5 & 56.4 & 93.1  \\
         & & w/o. & 94.4 & 89.9 & 94.1  \\
         & & w. $\wedge$ w/o. & 90.7 & 52.7 & 90.5 \\ \cmidrule{2-6}
         & \multirow{3}{*}{VOC-07} & w. & 92.1 & 50.5 & 92.3  \\
         & & w/o. & 90.2 & 81.1 &  93.4 \\
         & & w. $\wedge$ w/o. & 88.7 & 46.4 & 90.4 \\ \midrule
        \multirow{6}{*}{MLDecoder} & \multirow{3}{*}{MSCOCO} & w. & 93.7 & 48.5 & 95.1 \\ 
         & & w/o. & 93.9 & 86.0 & 94.1  \\ 
         & & w. $\wedge$ w/o. & 92.3 & 43.8 & 93.0 \\ \cmidrule{2-6}
         & \multirow{3}{*}{VOC-07} & w. & 94.2 & 32.2 &  97.6 \\
         & & w/o. & 93.8 & 80.1 & 87.1  \\ 
         & & w. $\wedge$ w/o. & 93.8 & 31.0 & 84.9 \\ \bottomrule
        \end{tabular}}
    \label{tab:T_MC}
\end{table}


\textbf{Overlapping Trigger}\hfill

\textit{Object Detection and Classification.}
\label{sec:case_study_1}
\Name is validated on object detectors and image classifiers in a dataset simulation, where \Name is directly attached to digital images.
Tables~\ref{tab:classification_dataset_results} and \ref{tab:detection_dataset_results} report the overall results for traffic sign classification and detection, respectively. All models achieve high ASRs, demonstrating the effectiveness of \Name. We also observe that the ASRs for all models are lower than $G_i$-ASR. Obviously, \Name needs to meet the attack targets under multiple lighting conditions. This demonstrates that attack performance may gradually decrease with more attack goals, which we have proved through theoretical analysis in Section~\ref{sec:methodology}.
We further conduct ablation studies on the impact of colored light intensity, patch region size, and increasing number of attack goals. Detailed results and analysis are provided in Appendix~\ref{app:sderot}.
The effects of color-goal combinations and attack-goal weights $W_k$ are discussed in Appendix ~\ref{sec:appendix}.

\begin{table}[h]
\caption{ASR(\%) of \Name on traffic sign classification.}
\vspace{-5pt}
\setlength{\tabcolsep}{2mm}{\resizebox{\linewidth}{!}{
\begin{tabular}{cccccc}
\hline
Classification & VGG-13 & VGG-16  & ResNet-50 & ResNet-101 & MobileNetV2 \\ \hline
BA &95.2   &100.0    &74.9  &70.1   & 78.2\\ \hline
$G_1$-ASR (MA1) &82.1    &96.7  &87.3   &80.0   & 81.8 \\ \hline
$G_2$-ASR (MA2)&80.9     &97.3  &99.7   &100.0  & 83.6\\ \hline
ASR &70.9   &95.9   &65.4   &62.3   & 64.1\\ \hline
\end{tabular}}}
\label{tab:classification_dataset_results}
\end{table}

\begin{table}[h]
\caption{ASR(\%) of \Name on traffic sign detection.}
\vspace{-5pt}
\setlength{\tabcolsep}{7mm}{\resizebox{\linewidth}{!}{
\begin{tabular}{cccc}
\hline
Detection & YOLOv3 & YOLOv5 & Faster R-CNN  \\ \hline
BA &100.0  &95.4  &100.0   \\ \hline
$G_1$-ASR (MA) &91.8  &75.4  &85.1  \\ \hline
$G_2$-ASR (HA) &95.6  &91.2  &90.6 \\ \hline
ASR &85.9  &71.9  &80.3  \\ \hline
\end{tabular}}}
\label{tab:detection_dataset_results}
\end{table}

\begin{table*}[h]
\centering
\caption{Transferability across depth estimation models in simulation.}
\vspace{-5pt}
\resizebox*{\linewidth}{!}{

\begin{tabular}{c|cccc|cccc|cccc|cccc}
\hline
\multirow{2}{*}{\diagbox[dir=NW]{Generation}{Test}} & \multicolumn{4}{c|}{Mono2} & \multicolumn{4}{c|}{Mande} & \multicolumn{4}{c|}{MiDaS} & \multicolumn{4}{c}{DeAny} \\ \cline{2-17} 
                                                                           & BA      & $G_1$-ASR      & $G_2$-ASR     & ASR& BA      & $G_1$-ASR     & $G_2$-ASR    & ASR  & BA      & $G_1$-ASR      & $G_2$-ASR   & ASR  & BA      & $G_1$-ASR     & $G_2$-ASR   & ASR  \\ \hline
Mono2                                                                         & 97.3   &  96.7 & 85.4 &84.8 & 100 &  43.5 & 18.7  & 18.5 & 81.2  & 0  & 25.3 & 0 & 100   & 12.5  & 6.2 & 0 \\
Mande                                                                         &100   &43.6    &11.36   &6.25  &100   &80.0   &93.7    &73.21 &100  &0   &12.05  &0  &87.5    &18.7  &10.9  &5.8 \\
MiDaS                                                                         & 99.1   & 11.7   & 7.53  & 6.0 & 100 & 12.1  &0 &0   & 85.7   & 82.6   & 59.9  & 56.2  & 86.4 & 25.6 & 5.2  & 5.12 \\
DeAny                                                                               &86.6   &2.5    &23.3  &0.1  &99.2   &8.3  &20.0    &1.5   &94.1    &5.6  &16.3  &2.4 &99.2  &72.0   &83.96 &65.3   \\ \hline 
\end{tabular}

}
\label{tab:de}
\end{table*}

\textit{Depth Estimation.}
We validate the effectiveness of \Name against CNN-based and transformer-based depth estimation models. Similar to traffic sign recognition (Section~\ref{sec:case_study_1}), we attach \Name directly to digital images for simulation.
Table~\ref{tab:de} reports the results for different depth estimation models. We make four observations. 
(1) All models retain high benign performance when \Name is attached without predefined conditions, indicating that \Name has little effect under normal operating conditions. 
(2) In the white-box setting, all models achieve high ASRs (56.29\%--84.88\%), as the attacks are generated with access to the target model. 
(3) Model vulnerability varies: Mono2 is highly susceptible to attacks generated for itself and Mande, while MiDaS is the most robust overall. 
(4) In the black-box setting, transferability is limited, particularly between CNN-based models (Mono2, Mande) and transformer-based models (MiDaS, DeAny). This result suggests that architectural differences produce distinct feature representations and decision boundaries, which reduce attack effectiveness. Fig.~\ref{fig:strong_weak_depth} in Appendix ~\ref{app:aekdm} further shows that, on KITTI with Mono2, \Name can successfully induce FA and NA under red and green light projections, respectively, while remaining benign without projection.
We further conduct ablation studies on the impact of color intensity. Results are provided in Appendix ~\ref{app:sderot}.


\begin{figure*}[t]
\centering

\includegraphics[width=0.32\linewidth]{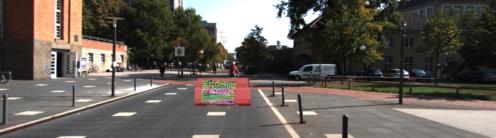}
\includegraphics[width=0.32\linewidth]{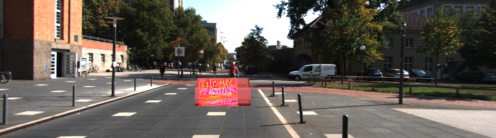}
\includegraphics[width=0.32\linewidth]{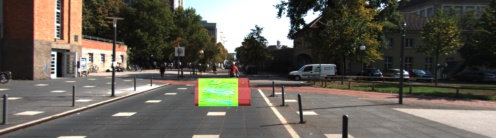}
\includegraphics[width=0.32\linewidth]{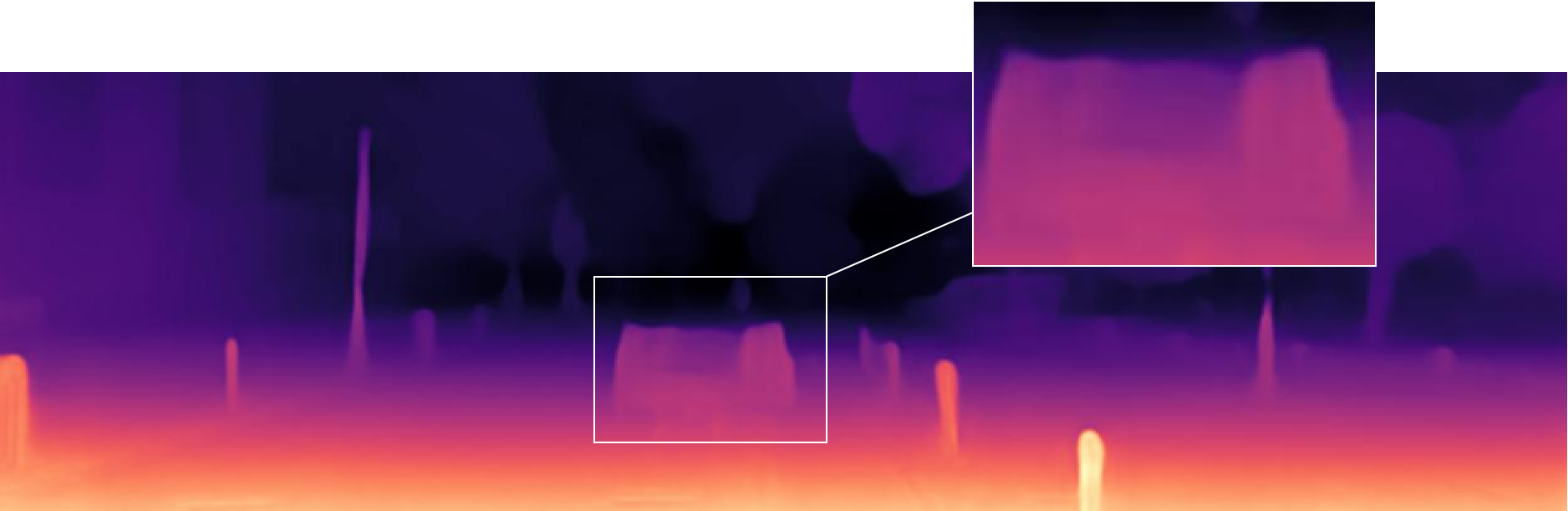}
\includegraphics[width=0.32\linewidth]{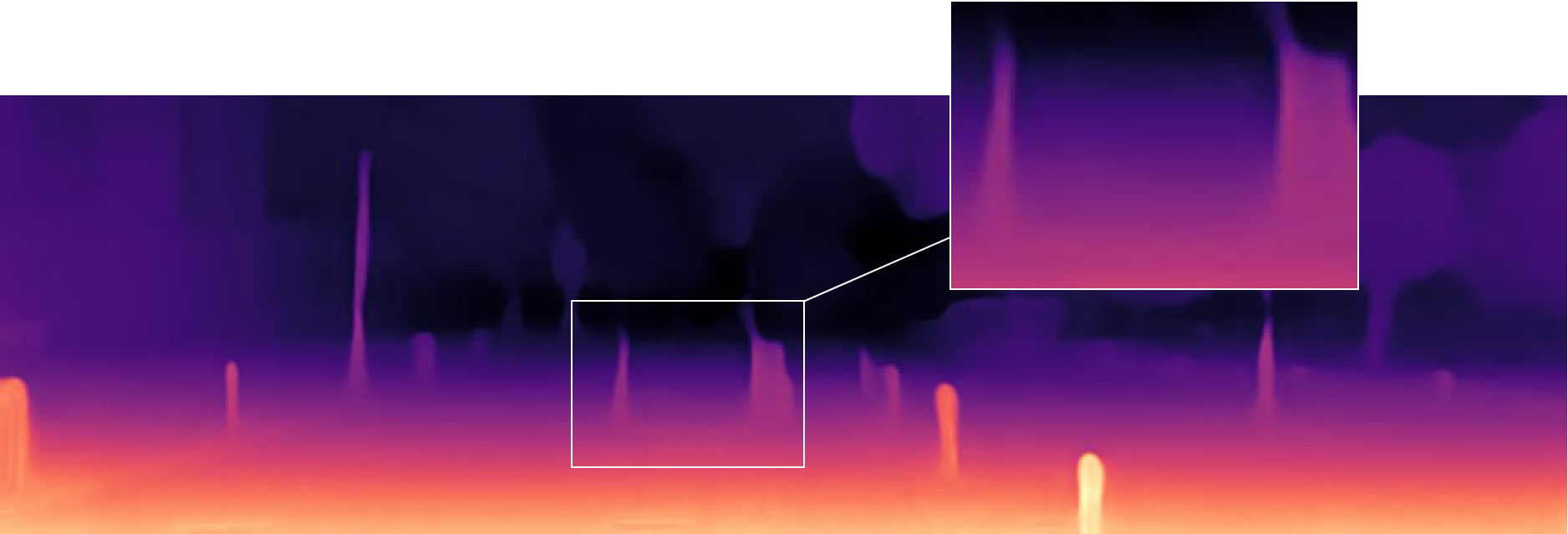}
\includegraphics[width=0.32\linewidth]{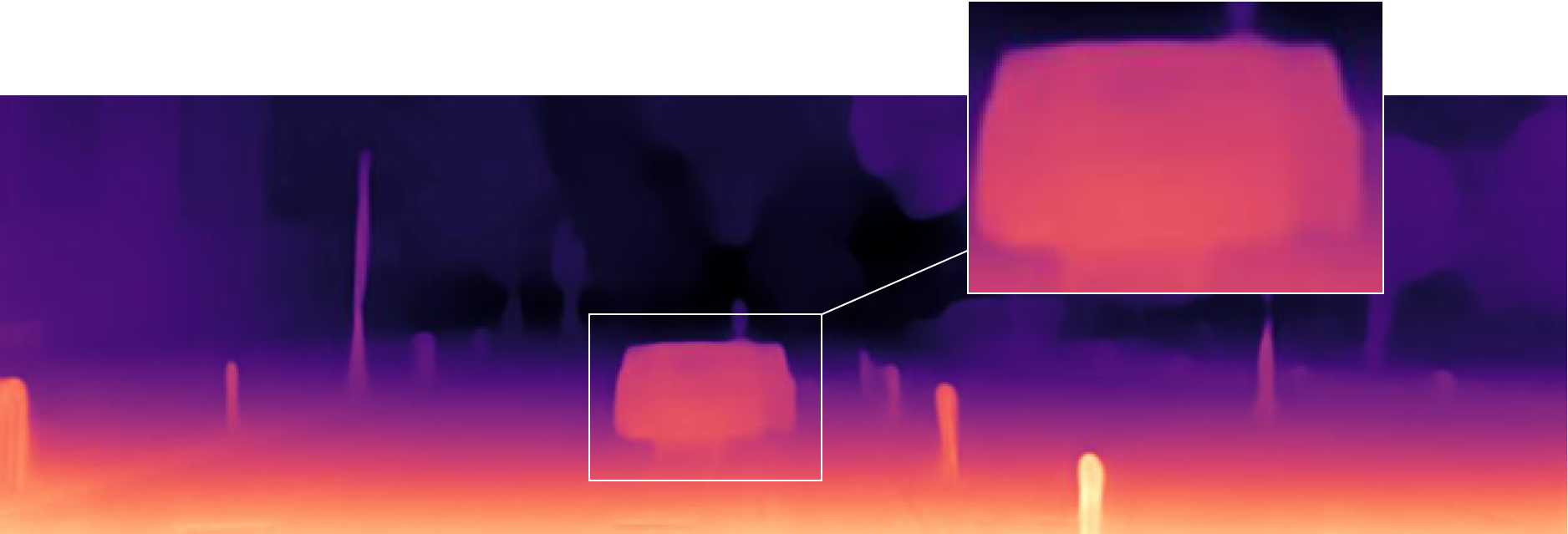}
\vspace{-5pt}
\caption{Visualizations on KITTI dataset. From left to right: \Name does not affect depth estimation without light projection; \Name is predicted up to more than 20\% farther away than its actual distance when red light is projected; \Name is predicted up to less than 20\% closer than its actual distance when green light is projected.}
\label{fig:strong_weak_depth}
\end{figure*}

\subsubsection{Transferability}
Details are provided in Appendix ~\ref{app:transferability}.

\subsection{Physical World Evaluation}


\subsubsection{Experimental Setup}\hfill
 
\textbf{Separate Trigger.}

Without loss of generality, we choose the object detection task. 
Following previous experiments, we use a traffic cone as the trigger and a stop sign as the victim object. Photos taken at the attack location minimize background influence, with the camera 5 meters away, $\lambda$ = 0.5, and YOLOv3/YOLOv5 (MSCOCO2017) as victim models.
We focus on Scenario 1, where \Name is benign normally but launches a hiding attack when the trigger is present. We place and remove the traffic cone to observe label shifting, calculating ASR separately for frames with and without the trigger.

\textbf{Overlapping Trigger.}

\textit{Object Detection and Classification.}
Experiments are conducted on a closed campus road using a UGV, our \Name, and a color flashlight (Fig. 7).
(1) \textit{UGV and cameras.} The UGV has an Intel RealSense D435i front-facing camera. A DJI Action 3 and iPhone 11 Pro Max are also mounted at the same position for comparison. All cameras run at 30 fps at 1.5 meters height (resolutions in Table~\ref{tab:cameras} in Appendix~\ref{physical_overlapping}). 
(2) \textit{\Name} is color-printed at 50cm $\times$ 50cm and placed at 1.7 meters height. 
(3) \textit{Color flashlight.} The flashlight (3000LM, \$16.75 from Amazon) offers green and orange filters, three dimming levels and high-frequency flash mode. It is placed 2.3 meters in front of \Name, focused to cover it. The UGV starts 15 meters away, moving forward at 5m/s (18km/h). We repeat the experiment for 5 runs.
We conduct both dynamic evaluations (testing attack performance across varying distances) and static evaluations (testing multiple light conditions on the same frame). Detailed setups are provided in Appendix~\ref{physical_overlapping}. 

\textit{Monocular Depth Estimation.}
We conduct the experiments on a closed campus road. The target camera is an Intel RealSense D435i. We mount \Name on the rear of a BMW X1, which serves as the target object. The vehicle is 4.95 m long, 1.97 m wide, and 1.905 m high. We use Mono2 as the monocular depth estimation model. The victim vehicle drives toward the target vehicle from 10 m away while recording the adversarial scene.

\subsubsection{Evaluation Results}\hfill

\textbf{Separate Trigger.}

 \Tref{tab:tri_distance} and~\Tref{tab:distance_s} show how trigger and camera distance affect \Name's performance. From~\Tref{tab:tri_distance}, YOLOv5's w. ASR drops from 95.5\% to 51.5\% as the trigger moves farther from the stop sign. This is likely due to limited positional augmentation during optimization and CNN receptive field constraints. As shown in~\Fref{fig:OD_dist_cons}, the trigger's influence diminishes gradually and becomes negligible beyond 30 centimeters. From~\Tref{tab:distance_s}, both w. ASRs and w/o. ASRs decrease with greater camera distance, indicating that \Name's overall effectiveness degrades as the camera moves farther away. In Appendix~\ref{app:physical_separate}, we also provide interpretation of the attack.

\begin{table}[h]
\centering
\caption{ASR under different triggering distances.}
\vspace{-5pt}
\setlength{\tabcolsep}{1mm}{\resizebox{0.8\linewidth}{!}{
\begin{tabular}{c|cccc|cccc}
\toprule
\multicolumn{1}{c|}{Model} & \multicolumn{4}{c|}{YOLOv3} & \multicolumn{4}{c}{YOLOv5} \\ \midrule
\multicolumn{1}{c|}{Distance (cm)}&\multicolumn{1}{l|}{0-5} & \multicolumn{1}{l|}{5-10} & \multicolumn{1}{l|}{10-15} &  \multicolumn{1}{l|}{15-20} & \multicolumn{1}{l|}{0-5} & \multicolumn{1}{l|}{5-10} & \multicolumn{1}{l|}{10-15} &  \multicolumn{1}{l}{15-20} \\ \midrule
\multicolumn{1}{c|}{w. ASR(\%)} & 90.4 & 84.5 & 81.5 & 44.1 & 95.5 & 91.2 & 86.7 & 51.5 \\ \bottomrule
\end{tabular}}}
\label{tab:tri_distance}
\vspace{-15pt}
\end{table}


\begin{table}[h]
\centering
\caption{ASR under different camera distances.}
\vspace{-5pt}
\setlength{\tabcolsep}{1mm}{\resizebox{0.8\linewidth}{!}{
\begin{tabular}{c|cccc|cccc}
\toprule
\multicolumn{1}{c|}{Model} & \multicolumn{4}{c|}{YOLOv3} & \multicolumn{4}{c}{YOLOv5} \\ \midrule
\multicolumn{1}{c|}{Distance (m)}&\multicolumn{1}{c|}{2-4} & \multicolumn{1}{c|}{4-6} & \multicolumn{1}{c|}{6-8} & \multicolumn{1}{c|}{8-10} & \multicolumn{1}{c|}{2-4} & \multicolumn{1}{c|}{4-6} & \multicolumn{1}{c|}{6-8} & \multicolumn{1}{c}{8-10} \\ \midrule
\multicolumn{1}{c|}{w. ASR(\%)} & 90.4 & 85.5 & 71.1 & 54.3 & 95.1 & 93.2 & 76.6 & 50.5 \\
\multicolumn{1}{c|}{w/o. ASR(\%)} & 94.0 & 94.1 & 85.9 & 70.4 & 97.3 & 95.5 & 88.3 & 74.6 \\ \bottomrule
\end{tabular}}}
\label{tab:distance_s}
\vspace{-5pt}
\end{table}

\begin{figure*}[t]
    \centering
    \includegraphics[width=0.90\linewidth]{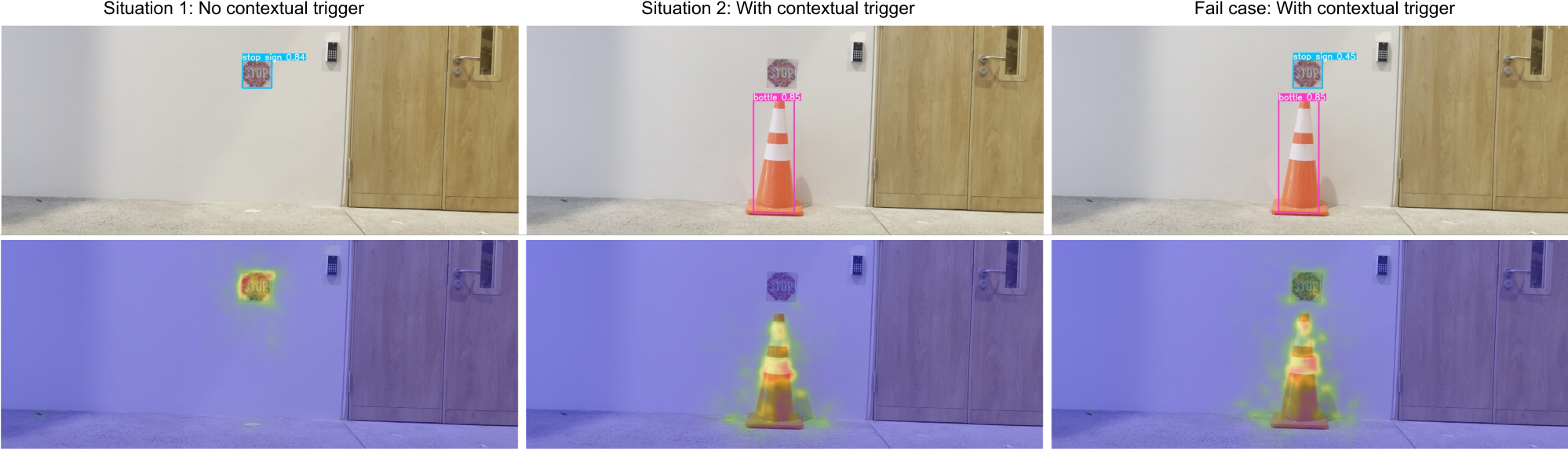}
     \caption{Attack interpretation with saliency map. First row: detection results. Second row: corresponding saliency map. First column: without triggering context. Second column: with triggering context. Third row, a failure case with triggering context.}
    \vspace{-10pt}
    \label{fig:Dist_OD}
\end{figure*}

\begin{figure}
    \centering
    \includegraphics[width=0.9\linewidth]{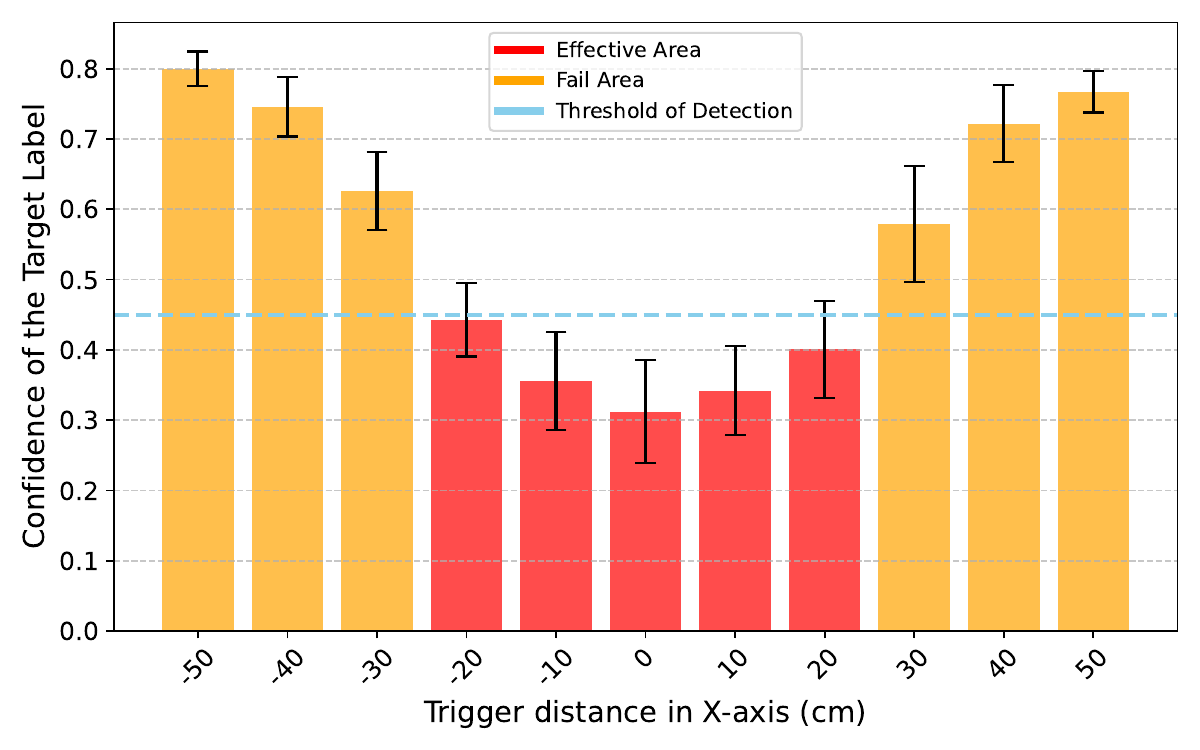}
    \vspace{-5pt}
    \caption{Prediction confidence of YOLOv5 with different distances between the triggering context and victim object.}
    \label{fig:OD_dist_cons}
\vspace{-20pt}
\end{figure}

\begin{figure*}[t]
\centering
\includegraphics[width=0.95\linewidth]{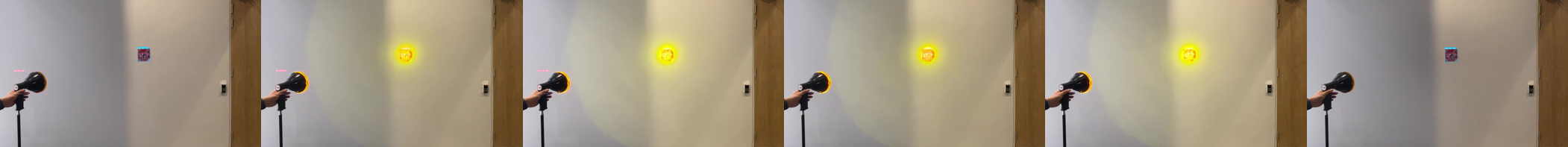}
\includegraphics[width=0.95\linewidth]{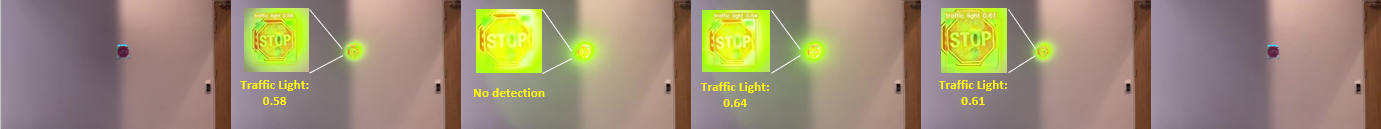}
\caption{Consecutive frames inference by YOLOv5 under high-frequency flashlight of \Name in the physical world. \textbf{Upper row}: the light color is orange. The intensity of colored lights from left to right: no projection, weak, strong, weak, strong, no projection. HA can be activated successfully under both weak and strong light intensities while remaining benign without light projections. \textbf{Lower row}: the light color is green. The intensity of colored lights from left to right: no projection, weak, strong, strong, weak, no projection. MA only fails when the light intensity is too strong.}
\label{fig:strong_weak}
\vspace{-10pt}
\end{figure*}

\textbf{Overlapping Trigger.}

\textit{Results for Object Detection and Classification.}
 
\paragraph{Static Evaluation Results}
\label{sec:physical_robustness}

We use three object detectors: YOLOv3, YOLOv5, and Faster R-CNN. For each model, we generate one \Name. We use the green color for attack goal\_1 (HA) and orange color for attack goal\_2 (MA, which recognize a stop sign as a ``Traffic Light'' sign).
Fig.~\ref{fig:strong_weak} and Table~\ref{tab:effectiveness_physical} show the results. \Name attacks all models effectively. However, the ASR is much lower than on the dataset in Table~\ref{tab:detection_dataset_results}. This may be due to the color gap between the light and the mask, as well as limited control of light intensity.

\begin{table}[h]
\vspace{-5pt}
\caption{Effectiveness of \Name for traffic sign recognition in the physical world.}
\vspace{-5pt}
\setlength{\tabcolsep}{6mm}{\resizebox{\linewidth}{!}{
\begin{tabular}{cccc}
\hline
&YOLOv3&YOLOv5& Faster R-CNN \\ \hline
BA & 84.4   & 83.8 & 93.7 \\
$G_1$-ASR (MA) & 60.2 & 58.8   & 69.2 \\
$G_2$-ASR (HA)& 98.9 & 96.7 & 96.1   \\ 
ASR & 45.5 & 48.7 & 36.7 \\ \hline
\end{tabular}}}
\label{tab:effectiveness_physical}
\vspace{-15pt}
\end{table}

\paragraph{Dynamic Evaluation Results}

A vehicle is driving on a road equipped with high-resolution cameras and advanced image processing algorithms for object recognition. As it approaches a traffic sign, typically, the sign should appear larger in the vehicle's camera feed as the distance between the vehicle and the sign decreases.
As we stated in Appendix~\ref{app:dresepwa},
in the EoT setting, we do not use the traditional assumption that the distribution of pixels is uniform, but instead set a larger weight on a smaller pixel size. Table~\ref{tab:distance} shows the results. 
We observe that the results do not exhibit too many differences across these models.
However,  $G_1$-ASR (MA) increases slightly as the distance decreases.

\begin{table}[h]
\centering
\caption{ASR(\%) under different distances (m) for traffic sign recognition.}
\vspace{-5pt}
\setlength{\tabcolsep}{1.5mm}{\resizebox{\linewidth}{!}{
\begin{tabular}{cccccccccc}
\hline
\multicolumn{1}{l|}{Model} & \multicolumn{3}{c}{YOLOv3} & \multicolumn{3}{c}{YOLOv5} & \multicolumn{3}{c}{Faster R-CNN} \\ \hline
\multicolumn{1}{l|}{Distance}&\multicolumn{1}{l|}{3-6 } & \multicolumn{1}{l|}{6-9} & \multicolumn{1}{l|}{9-15} & \multicolumn{1}{l|}{3-6} & \multicolumn{1}{l|}{6-9} & \multicolumn{1}{l|}{9-15} & \multicolumn{1}{l|}{3-6} & \multicolumn{1}{l|}{6-9} & \multicolumn{1}{l}{9-15}  \\ \hline
\multicolumn{1}{l|}{BA} & 86.8 & 85.8 & 85.2 & 93.4 & 93.4 & 93.0 &91.5  &89.5 &   88.7 \\
\multicolumn{1}{l|}{$G_1$-ASR (MA)} &63.0   & 57.3 & 46.6 &  71.5& 65.0  &66.3 & 60.7 &68.4&  46.3  \\ 
\multicolumn{1}{l|}{$G_2$-ASR (HA)} & 82.8 & 75.4 & 78.1 & 79.1 & 88.0 &86.8 & 78.7 &  82.2 & 81.0 \\\hline
\end{tabular}}}
\label{tab:distance}
\vspace{-3pt}
\end{table}

We further evaluate the robustness of \Name under varying sunlight intensities, its transferability across detection models, and its consistency across different cameras. Detailed results are provided in Appendix~\ref{physical_overlapping}.

\textit{Results for Monocular Depth Estimation.}
\label{sec:case_study_2}

This attack is generic and can be applied to different objects on public roads. We focus on a static vehicle, as shown in Fig.~\ref{fig:setup_physical}. We choose vehicles because they are common in driving scenes, widely used in prior work~\cite{liubeware, zheng2024pi,cheng2022physical}, and key targets of autonomous driving perception systems.

\begin{figure}[h]
\centering
\includegraphics[width=0.8\linewidth]{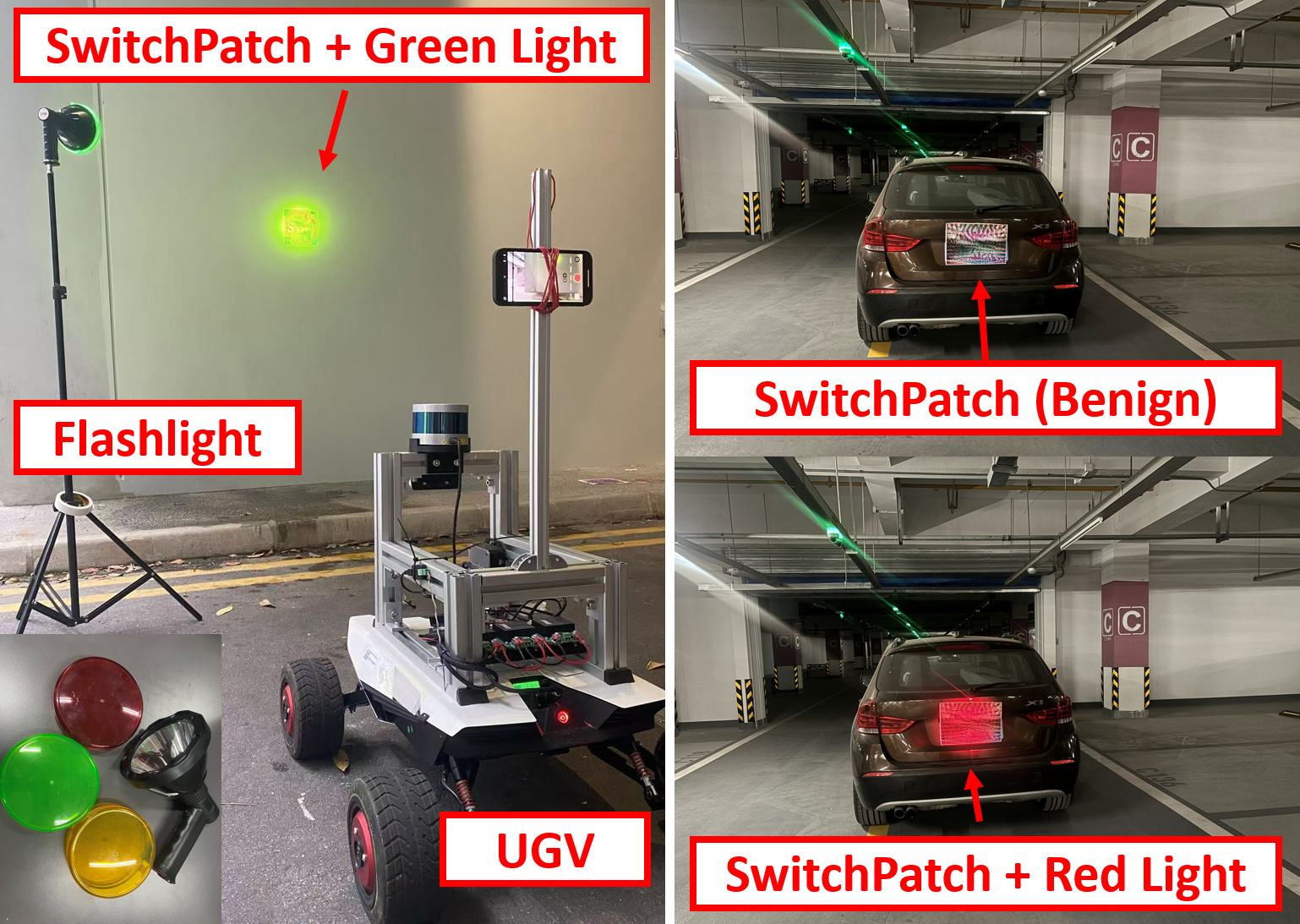}
\vspace{-5pt}
\caption{Experimental setup in the real-world. \textit{Left:} \Name is attached to a stop sign for traffic sign recognition; \textit{Right:} \Name is attached to the back of the vehicle for depth estimation.}
\label{fig:setup_physical}
\vspace{-5pt}
\end{figure}

\paragraph{Evaluation methodology}
We drive each route four times: benign, with \Name, with \Name under green light, and with \Name under blue light. We estimate ground-truth depth by $z = fH/s$, where $f$ is the focal length, $H$ is the real vehicle height, and $s$ is its height in the image, which is then used to compute $E_d$. We record frames at constant speed from 3 m to 15 m, collecting 200 frames total.

\paragraph{Impact of distance}
We first investigate how the distance can affect \Name for depth estimation.
For each distance interval (i.e., 3 m), we have 50 frames of images for evaluation.
Table~\ref{tab:depth_distance} provides the results, showing that \Name can achieve high ASRs at varying distances for all the models. Fig.~\ref{fig:strong_weak_depth} in Appendix~\ref{sec:appendix}
visualizes the \Name attack in the physical world.

\begin{table}[t]
\centering
\caption{ASR(\%) under different distances (m) for depth estimation.}
\vspace{-5pt}
\setlength{\tabcolsep}{1.5mm}{\resizebox{\linewidth}{!}{
\begin{tabular}{cccccccccc}
\hline
\multicolumn{1}{l|}{Model} & \multicolumn{3}{c}{Mono2} & \multicolumn{3}{c}{Mande} & \multicolumn{3}{c}{MiDaS} \\ \hline
\multicolumn{1}{l|}{Distance}&\multicolumn{1}{l|}{3-6 } & \multicolumn{1}{l|}{6-9} & \multicolumn{1}{l|}{9-15} & \multicolumn{1}{l|}{3-6} & \multicolumn{1}{l|}{6-9} & \multicolumn{1}{l|}{9-15} & \multicolumn{1}{l|}{3-6} & \multicolumn{1}{l|}{6-9} & \multicolumn{1}{l}{9-15}  \\ \hline
\multicolumn{1}{l|}{BA} & 87.5 &  85.4& 83.3 & 97.9 & 85.4 & 81.3 &91.6 &81.6 &  79.2  \\
\multicolumn{1}{l|}{$G_1$-ASR(NA)} & 66.7  & 58.3 & 52.5  &70.8& 64.5 & 60.4 & 69.3 & 52.1 & 31.3   \\ 
\multicolumn{1}{l|}{$G_2$-ASR(FA)} & 45.8 & 41.7 & 35.4 & 41.7 & 39.6 & 33.3 & 43.7 & 37.5 &27.1\\\hline
\end{tabular}}}
\label{tab:depth_distance}
\vspace{-15pt}
\end{table}

We also evaluate the impact of different depth thresholds on attack success rate. Details are provided in Appendix~\ref{physical_overlapping}.

\section{Discussion}
\label{sec:occluding_method}

\noindent\textbf{More switching conditions.}
\Name is the first to demonstrate that an attacker can dynamically switch attack targets using a PAP in the physical world, making it a novel approach without an existing baseline. Beyond the triggers used in our experiments, attackers can also employ other techniques, such as occluding parts of the adversarial example to achieve switchable attack goals. Specifically, an attacker can optimize a global perturbation for traffic signs and tailor the optimization to different occlusion positions, each corresponding to a specific attack target. Fig. 13
in Appendix~\ref{sec:ATF}
illustrates demos using occlusion. The attacker can use a cube to occlude the upper, left, or other areas of the patch.

To demonstrate the feasibility of such attacks, we use a cube to occlude the left/up part as attack goal 1 and the right/down part as attack goal 2, respectively. Experiments are conducted using YOLOv3. The results are presented in Table XXVII
in Appendix~\ref{sec:ATF};
the validation set includes 100 images that are randomly selected from the KITTI dataset.  
We observe that the other trigger techniques provide more consistent results than occlusion techniques, which although highly effective in the MA (especially for Left and Right) show significant drops in the HA scenario. We encourage researchers to explore further techniques to enhance the strength and stealthiness of \Name in the future.

\noindent\textbf{Countermeasures.} We have used different defense methods, e.g., input preprocessing, including image smoothing~\cite{cohen2019certified}, feature compression~\cite{jia2019comdefend} and input
randomization~\cite{xie2017mitigating}; defensive dropout~\cite{wang2018defensive} and adversarial training~\cite{madry2017towards}, to defend against \Name with simple experiments. These defense methods can slightly mitigate ASR in the range of 0 to 23\% on YOLOv5, which means these methods cannot fundamentally defend \Name. Since \Name is a general attack strategy, designing effective defense techniques that can be applied to various tasks will be our future work.

\noindent\textbf{Colored light projection improvement.}
In \Name, the adversary activates the attack subtly and briefly, leaving the vehicle little time to react; thus, light strength has limited impact on stealthiness. Compared with \cite{lovisotto2021slap,nassi2020phantom,zhu2023tpatch,chen2019shapeshifter}, \Name uniquely preserves the original traffic sign texture, and although it is less stealthy than RP2~\cite{eykholt2018robust}, it supports multiple attack goals. Its daytime performance is weaker because the colored light projections are less visible under strong ambient light. This limitation could be mitigated by using brighter colored lights or other triggers.

\section{Conclusion}
We introduce \Name, a novel and versatile PAP designed for dynamic and strategic manipulation in real-world environments. \Name is unique in its ability to leverage a diverse set of predefined physical conditions, allowing it to seamlessly adapt its attack objectives based on real-time situational awareness. We demonstrate the effectiveness, adaptability, and robustness of \Name through extensive evaluations across both simulation and real-world scenarios. Our results consistently highlight high attack success rates across a wide range of operational conditions.


\vspace{-5pt}
\section*{Acknowledgments}
ChatGPT (OpenAI) is used to assist in generating icons for figures and in polishing the language of the manuscript.

\bibliographystyle{IEEEtran}
\bibliography{references}


\clearpage
\appendix
\setcounter{equation}{16}

\subsection{Supplementary to the Theoretical Analysis}
\label{app:sta}

\noindent\textbf{The Shapley Value}~\cite{shapley1953value}. Consider a cooperative game with a set of $N$ distinct players $\Omega=\{1, 2,...,N\}$ participating in it, wishing to win a high reward. The Shapley Value $\phi(i|\Omega)$ unbiasedly measures the contribution of the $i$-th player to the total reward gained by all players in $\Omega$, defined as follows:
\begin{equation}
 \phi(i|\Omega)=\sum_{S\subseteq\Omega\backslash\{i\}}{\frac{|S|!(n-|S|-1)!}{n!}(v(S\cup\{i\})-v(S))}, 
\end{equation}
where $v(\cdot)$ stands for the reward function, and $v(S)$ is the reward obtained by players from a subset $S\subseteq\Omega$.

\noindent\textbf{Shapley Interaction Index}~\cite{Index}. This index $I_{ij}$ between players $i,j$ is defined as follows:
\begin{equation}
I_{ij}=\phi(S_{ij}|\Omega)-[\phi(i|\Omega\backslash\{j\})+\phi(j|\Omega\backslash\{i\})],   
\label{eq: idx def}
\end{equation}
where $\phi(i|\Omega\backslash\{j\})$ and $\phi(j|\Omega\backslash\{i\})$ represent the individual contributions of units $i$ and $j$, respectively. This metric shows the gain in terms of the reward when both $i$ and $j$ are present in the game over the circumstances where they are not. In this case, Wang et al.~\cite{wang2020unified} showed an equivalent form of the index as follows:
\begin{equation}
I_{ij}'=\phi_{i,w\backslash j}-\phi_{i,w\backslash o j},
\label{eq: idx simp}
\end{equation}
where $\phi_{i,w\backslash j}$ is the gain of member $i$ when $j$ is always present, $\phi_{i,w\backslash o j}$ is that when $j$ is always absent. Specifically,
\begin{equation}
\left\{
\begin{array}{l}
    \phi_{i,w\backslash j}=\sum_{S\subseteq\Omega\backslash\{i,j\}}\frac{|S|!(n-|S|-2)!}{(n-1)!}(v(S\cup\{i,j\})-{}\\
    \qquad\qquad\qquad\qquad\qquad\qquad\qquad v(S\cup\{j\})) \\
    \phi_{i,w\backslash o. j}=\sum_{S\subseteq\Omega\backslash\{i\}}{\frac{|S|!(n-|S|-2)!}{(n-1)!}(v(S\cup\{i\})-v(S))}
\end{array}
\right.
\end{equation}

In comparison with directly calculating~\Eref{eq: idx def} which is NP hard, obtaining the index from~\Eref{eq: idx simp} is much more efficient. Moreover,~\Eref{eq: idx simp} is of a simpler form for further analysis.

\subsection{Proof of Proposition 2}
\label{app:prop2}
\begin{proof}

    We start by analyzing the conditions. 
    With $\|\nabla_{x}{\mathcal{L}(x+c)}\|^2_2 \ge \frac{-\log{\beta}}{\alpha}$ and $\nabla^{T}_{x}{\mathcal{L}(x+c)}\nabla_{x}{\mathcal{L}(x)}\le0$, it gives that:
    \begin{equation}
    \begin{aligned}
        \|\nabla_{x}{\mathcal{L}(x+c)}\|^2_2 &= \nabla^T_{x}{\mathcal{L}(x+c)}\nabla_{x}{\mathcal{L}(x+c)} \\
        & \le \nabla^T_{x}{\mathcal{L}(x+c)}\nabla_{x}{\mathcal{L}(x+c)} \\     &\quad -\nabla_{x}^T{\mathcal{L}(x+c)}\nabla_{x}{\mathcal{L}(x)} \\
        & = \nabla^T_{x}{\mathcal{L}(x+c)}[\nabla_{x}{\mathcal{L}(x+c)}-\nabla_{x}{\mathcal{L}(x)}]
    \end{aligned}
    \label{eq:9}
  \end{equation}
    On the other hand, notice that $\|\nabla{\mathcal{L}(x)}\|^2_2 \geq 0$, we have:
    \begin{equation}
    \begin{aligned}
        & \nabla^T_{x}{\mathcal{L}(x)}(\nabla_{x}{\mathcal{L}(x+c)} - \nabla_{x}{\mathcal{L}(x)}) \\
        = \quad & \nabla^T_{x}{\mathcal{L}(x)}\nabla_{x}{\mathcal{L}(x+c)} - \nabla^T_{x}{\mathcal{L}(x)}\nabla_{x}{\mathcal{L}(x)} \\
        = \quad & \nabla^T_{x}{\mathcal{L}(x)}\nabla_{x}{\mathcal{L}(x+c)} - \|\nabla_{x}{\mathcal{L}(x)}\|^2_2 \\
        \leq \quad & 0. 
    \end{aligned}
    \label{eq:10}
    \end{equation}

    According to the Taylor's expansion, consider the loss function:
    \begin{align*}
        \mathcal{L}(x + \delta_c) = \mathcal{L}(x) + \delta_{c}^T\nabla_{x}\mathcal{L}(x) + R_1(\delta)
    \end{align*}
    The remainder term is sufficiently small $R_1(\delta_c) \rightarrow 0$.
    Insert the \Name adversarial perturbation $\delta_c = \alpha \nabla_{x}(\mathcal{L}(x+c) - \mathcal{L}(x))$, with Eq-\eqref{eq:9},
    \begin{align*}
        \mathcal{L}(x + \delta_c) &= \mathcal{L}(x) + \alpha \nabla_{x}^T(\mathcal{L}(x+c) - \mathcal{L}(x))\nabla_{x}\mathcal{L}(x) \\ &\quad + R_1(\delta_c) \\
        &= \mathcal{L}(x) + \alpha \nabla^T_{x}\mathcal{L}(x)(\nabla_{x}\mathcal{L}(x+c) - \nabla_{x}\mathcal{L}(x)) \\
        &\quad + R_1(\delta_c) \\
        &\leq \mathcal{L}(x),
    \end{align*}
    and similarly for $\mathcal{L}(x + \delta_c)$ with Eq-\eqref{eq:10}, we have,
    \begin{align*}
        \mathcal{L}(x +c+ \delta_c) & = \mathcal{L}(x+c) \\
        &\quad + \alpha \nabla_{x}^T\mathcal{L}(x+c)(\nabla_{x}\mathcal{L}(x+c) - \nabla_{x}\mathcal{L}(x)) \\
                                    & \quad + R_1(\delta_c) \\
                                    & \geq \mathcal{L}(x+c) -\log \beta.
    \end{align*}

    Notice that the loss function is the negative logarithm of the prediction:
    $
    \mathcal{L}(x) = - \log f(x).
    $
    
    Therefore, given condition on $f(x)$, we take the logarithm of both sides of the inequality, yielding: 
    \begin{align*}
        \mathcal{L}(x + \delta_c) &\leq \mathcal{L}(x)  \\
        \exp\left(-\mathcal{L}(x + \delta_c) \right)&\geq  \exp\left(-\log \mathcal{L}(x)\right)\\
        f(x + \delta_c) &\ge \beta .
    \end{align*}
    Symmetrically, we have:
        \begin{align*}
        \mathcal{L}(x + c + \delta_c) &\geq \mathcal{L}(x+c)  \\
        \exp\left(-\mathcal{L}(x + c+ \delta_c) \right)&\geq  \exp\left(-\log \mathcal{L}(x + c)\right)\\
        f(x + c + \delta_c) &\le \beta .
    \end{align*}
    Thus, the Eq. 2 holds.
\end{proof}

\subsection{Capacity of PAP}
\label{app:cp}

\begin{theorem}\label{tm:Theorem1}
    The solution space of \Name decreases as the number $N$ of attack goals $y_{k}$ increases. 
\end{theorem}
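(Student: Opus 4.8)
The plan is to formalize the intuition that each additional attack goal imposes an additional constraint on the perturbation, so the feasible set can only shrink. Recall from the existence argument that a valid \Name perturbation must lie in
\[
S^{(N)} = X_\epsilon \cap S_1 \cap S_2 \cap \dots \cap S_N,
\]
where $X_\epsilon = \{\delta \mid \|\delta\|_p \le \epsilon\}$ and $S_k = \{\delta \mid f(x+\delta+cl_k) = y_k\}$ (together with the benign constraint $f(x+\delta)=y$, which is fixed and common to all $N$). The first step is simply to observe the set-theoretic monotonicity: since $S^{(N+1)} = S^{(N)} \cap S_{N+1}$, we immediately get $S^{(N+1)} \subseteq S^{(N)}$, hence $\mathrm{vol}(S^{(N+1)}) \le \mathrm{vol}(S^{(N)})$ under any reasonable measure (Lebesgue measure on the ball $X_\epsilon$, or cardinality in a discretized pixel space). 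This gives the monotone (non-strict) decrease that the theorem literally asserts.

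Next I would argue that the decrease is generically strict, which is the substantive content. The idea is that $S_{N+1}$ is a proper subset of $X_\epsilon$: the constraint $f(x+\delta+cl_{N+1}) = y_{N+1}$ is a nontrivial condition on $\delta$ because the clean input (and many nearby perturbations) is correctly classified, i.e. $f(x+cl_{N+1}) \neq y_{N+1}$ for a genuinely adversarial target $y_{N+1}$. Combined with continuity of $f$'s logits (the same assumption used in the Weierstrass argument), the complement of $S_{N+1}$ contains a neighborhood of $\delta=0$ of positive measure, so $X_\epsilon \setminus S_{N+1}$ has positive measure. Then one needs that this positive-measure chunk overlaps $S^{(N)}$; for this I would invoke that $S^{(N)}$ itself is assumed nonempty with nonempty interior (otherwise \Name with $N$ goals already fails), and that the logit map is a submersion on an open dense set, so the boundary $\partial S_{N+1}$ is measure zero and cannot "hide" the overlap. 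This yields $\mathrm{vol}(S^{(N+1)}) < \mathrm{vol}(S^{(N)})$ strictly.

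I would organize the write-up as: (i) restate the feasible-set characterization and note each goal contributes one intersection factor; (ii) prove non-strict monotonicity by pure set inclusion; (iii) prove strictness under the mild genericity/continuity assumptions above, pointing out that this also quantifies the trade-off advertised in the surrounding text — each new $(cl_k, y_k)$ pair "costs" a positive fraction of the budget ball. I would also note the complementary optimization view: in the penalized objective $\sum_k \mathcal{L}(f(x+\delta+cl_k),y_k) + \mathcal{L}(f(x+\delta),y)$, adding a term can only raise the pointwise objective, so the achievable minimum loss is non-decreasing in $N$, which is the "harder to generate" statement in dual form.

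The main obstacle I expect is making "strict decrease" rigorous without over-claiming: a purely topological argument only gives $\subseteq$, and genuine strictness requires some regularity of $f$ (continuity of logits is enough for positive-measure complement near the origin, but transversality of $\partial S_{N+1}$ with $S^{(N)}$ is where one must either assume genericity of the $cl_k$ or accept a "for almost every choice of conditions" qualifier). Given the paper's applied framing, I would state the non-strict version as the theorem and present strictness as a remark under an explicit genericity assumption, rather than attempting a fully general proof that would need Morse-theoretic or semialgebraic machinery on the network.
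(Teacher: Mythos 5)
Your core argument—writing the feasible set as $S^{(N)} = X_\epsilon \cap S_1 \cap \dots \cap S_N$ and observing that each additional goal adds one intersection factor, so $S^{(N+1)} \subseteq S^{(N)}$—is exactly the paper's proof, which stops at this non-strict set-inclusion monotonicity. Your additional material on generic strictness and the dual optimization view goes beyond what the paper argues (and is sensibly hedged), but the statement itself is established the same way in both.
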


\begin{proof}
Based on the above definitions,
the \Name solution needs to satisfy $\delta \in \mathcal{S}= X_{\epsilon} \cap \mathcal{S}_1 \cap\ldots \cap \mathcal{S}_N$. Since the solution space $\mathcal{S}_k$ is fixed when the types of conditions $cl_k$ are pre-defined, the size of solution space $\mathcal{S}$ decreases with the increase of $N$.
\end{proof}

Theorem~\ref{tm:Theorem1} indicates that optimizing $\delta$ becomes more difficult as the number of attack goals $\mathcal{L}_{cl}$ increases, due to the reduced solution space. This is also supported by experiments in Fig.~\ref{fig:increasing_goals_iter}. 


\begin{theorem}\label{tm:Theorem2}
  The rate of successful and simultaneous attacks on all objectives decreases as the number $N$ of attack goals $y_{k}$ increases. 
\end{theorem}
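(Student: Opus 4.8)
The plan is to build on Theorem~\ref{tm:Theorem1} and translate the shrinking of the solution space $\mathcal{S} = X_\epsilon \cap \mathcal{S}_1 \cap \dots \cap \mathcal{S}_N$ into a statement about the success rate of a practical optimization procedure. First I would fix the attack pipeline: the adversary runs a gradient-based optimizer (e.g.\ PGD on the joint loss $\sum_{k=1}^N \mathcal{L}(f(x+\delta+cl_k), y_k) + \mathcal{L}(f(x+\delta), y)$) from some initialization, and we declare the attack ``successful and simultaneous'' only if the returned $\delta^*$ lands in $\mathcal{S}$, i.e.\ every one of the $N$ goals is met at once while $\|\delta^*\|_p \le \epsilon$. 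Define the success rate as the probability (over random initialization and environmental transformations in the EoT expectation) that the optimizer reaches such a $\delta^*$.

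Next I would split the argument into a monotonicity claim and a strictness claim. For monotonicity: let $\mathcal{S}^{(N)} = X_\epsilon \cap \mathcal{S}_1 \cap \dots \cap \mathcal{S}_N$ denote the feasible set for the first $N$ goals. Since $\mathcal{S}^{(N+1)} = \mathcal{S}^{(N)} \cap \mathcal{S}_{N+1} \subseteq \mathcal{S}^{(N)}$, any $\delta$ that simultaneously satisfies $N+1$ goals also satisfies the first $N$; hence the event ``all $N+1$ goals succeed'' is contained in the event ``all $N$ goals succeed'', and the probabilities are nested and non-increasing. This already gives the inequality direction claimed. For the strict-decrease intuition, I would observe that each $\mathcal{S}_k$ is a proper subset of $X_\epsilon$ (a patch that is perfectly benign cannot also realize goal $y_k$), and argue that under mild genericity the new constraint $\mathcal{S}_{N+1}$ removes a set of positive measure from $\mathcal{S}^{(N)}$, so the fraction of initializations whose optimization trajectory terminates in the feasible region strictly drops.

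I would then connect set size to success probability quantitatively: model the optimizer's output as landing in a neighborhood of a local optimum, and note that the ``basin'' of initializations leading into $\mathcal{S}^{(N)}$ has volume monotone in $\mathrm{vol}(\mathcal{S}^{(N)})$; combined with the non-convexity remark already made via the KKT discussion, fewer feasible local optima means a smaller attracting set, hence a lower hitting probability. The cleanest honest version of the theorem is the monotone (non-strict) one, which follows purely from set inclusion; the strict version requires the genericity assumption, which I would state explicitly rather than prove. I would close by pointing to Figure~\ref{fig:increasing_goals} (and the iteration-count evidence in Figure~\ref{fig:increasing_goals_iter}) as empirical corroboration that the decay is not merely non-strict but pronounced in practice.

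The main obstacle is that ``success rate'' is an empirical/algorithmic quantity, not a purely set-theoretic one, so a fully rigorous reduction to Theorem~\ref{tm:Theorem1} needs an explicit probabilistic model linking the volume of $\mathcal{S}$ to the chance that a finite-step optimizer finds a point in it. I expect to handle this by making the modeling assumptions explicit (random restarts, bounded optimizer, feasible region of positive measure) and proving only the clean monotonicity consequence of the nesting $\mathcal{S}^{(N+1)} \subseteq \mathcal{S}^{(N)}$, deferring the strict-decrease and rate-of-decay behavior to the experiments.
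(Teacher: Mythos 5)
Your proposal is correct in substance but takes a genuinely different route from the paper. The paper's proof never touches the feasible-set nesting of Theorem~\ref{tm:Theorem1}: it compares \emph{optimal loss values}, defining $\tilde{\mathcal{L}}^N$ as the optimum of the worst-case (max) adversarial loss over all $N$ goals and observing that a maximum over a larger index set dominates the maximum over any subset, so $\tilde{\mathcal{L}}^N \geq \tilde{\mathcal{L}}^{N_s}$; the drop in success rate is then read off (informally) from the fact that the best attainable worst-goal loss can only grow with $N$. You instead argue at the level of events: $\mathcal{S}^{(N+1)} \subseteq \mathcal{S}^{(N)}$, so the event that a patch meets all $N+1$ goals is contained in the event that it meets the first $N$, giving non-increasing success probability, with strictness deferred to a genericity assumption. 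Your route is essentially a corollary of Theorem~\ref{tm:Theorem1} and yields a cleaner probabilistic statement, and you are more honest than the paper about the remaining gap (translating feasible-set size into the hitting probability of a finite-step optimizer); the paper's route stays closer to the quantity actually being minimized during patch generation, so it speaks more directly to optimization difficulty, but it leaves the link from ``higher optimal max-loss'' to ``lower rate'' just as implicit, and its notation (e.g.\ ``$N_s \cap [N]$'') is garbled. One caveat you should state explicitly: your event-nesting is airtight only when the \emph{same} patch is evaluated against nested goal sets; if the $N$-goal and $(N{+}1)$-goal attacks are produced by separate optimizations, the two success events live on different random variables and are not literally nested, so the monotonicity claim then needs either your fixed-patch evaluation model or a min--max comparison of the kind the paper makes.
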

\begin{proof}
    By denoting $\tilde{\mathcal{L}}^{N}$ as the optima of simultaneously attacking all $N$ objectives and $\tilde{x}_{adv}^{N_s}$ as the optima of simultaneously attacking $N_s$ objectives where $N_s \cap [N]$, we have
    \begin{align}
        \tilde{\mathcal{L}}^N &= 
        \max\{\mathcal{L}_{cl}^{k} |k=1,\ldots,N\} \\
        &\geq \max\{\mathcal{L}_{cl}^{k} |k=1,\ldots,N, \text{and}\ k\neq l_1,\ldots,l_{N_s}\} 
        = \tilde{\mathcal{L}}^{N_s}, \nonumber
    \end{align}
    which completes the proof.
\end{proof}

Theorem~\ref{tm:Theorem2} indicates that attack success rate monotonically decreases as the number of attack targets increases. Empirical evidence supporting this theorem can be found in Fig.~\ref{fig:increasing_goals} across different models like YOLOv3 and Faster R-CNN.

\begin{figure}[h]
\centering
\includegraphics[width=0.7\linewidth]{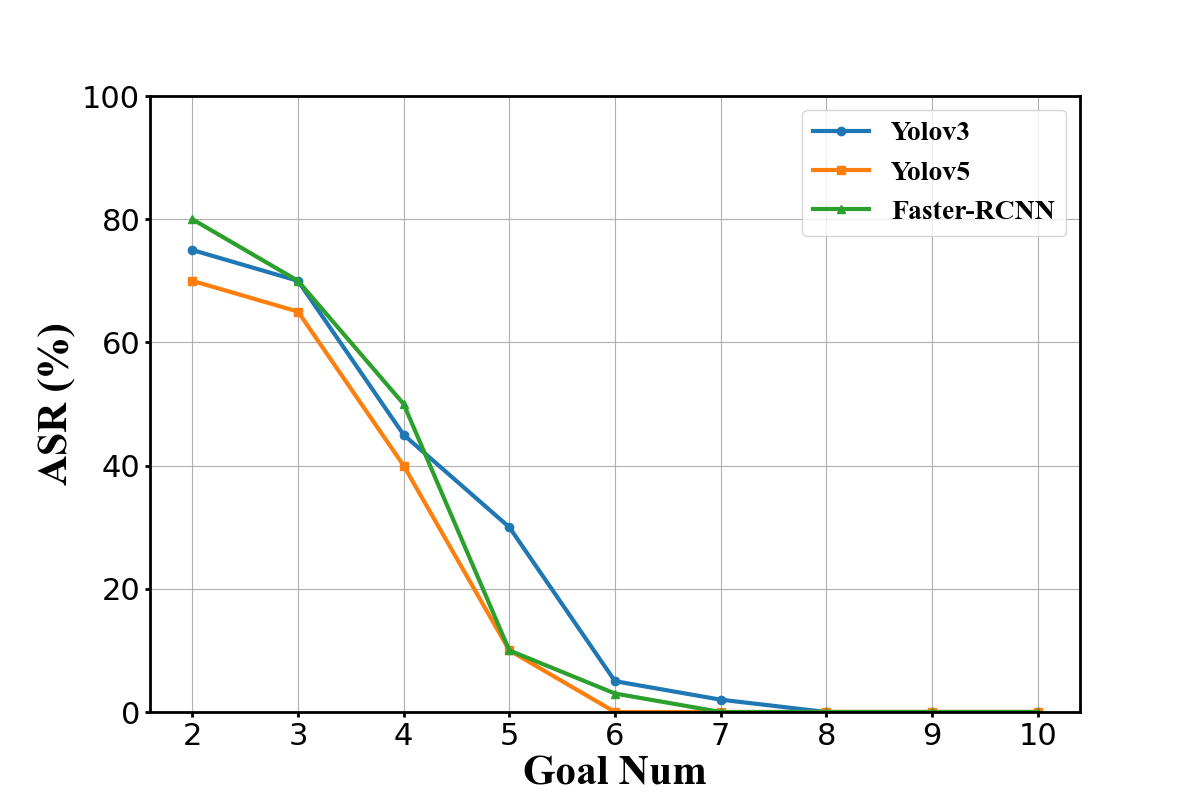}
\vspace{-5pt}
\caption{ASR with increasing attack goals on YOLOv3, YOLOv5, Faster R-CNN.}
\vspace{-15pt}
\label{fig:increasing_goals}
\end{figure}

\subsection{Formulations for Different Tasks}
\label{app:fdt}
\subsubsection{Separate Trigger}\hfill

Below we provide the specific formulations for different tasks.

\begin{itemize}[leftmargin=*,nosep,topsep=0pt,itemsep=2pt]

\item \textbf{Object Detection.} 
An object detection model outputs bounding boxes 
$\hat{y} = \{y_{loc}, y_{size}, C\}$, where $y_{loc}$ is 
the localization, $y_{size}$ is the box size, and $C$ is 
the category confidence. The adversarial perturbation 
$x_{adv}$ is generated by:
{\setlength{\abovedisplayskip}{3pt}
\setlength{\belowdisplayskip}{3pt}
\begin{equation}
\begin{split}
x_{adv} = \underset{x' \in \mathcal{X}_{\epsilon}}{\mathrm{argmin}} 
\Bigg( 
&\sum_{\substack{k=1 \\ k \neq h}}^{N} 
\big( l_{obj} + l_{cls} + l_{bbox} \big) \\
& \quad \quad \big( f(x' \oplus cl_k), \{y_{loc}, y_{size}, C_k\} \big) \\
&+ \mathcal{L}\big(f(x'), \{y_{loc}, y_{size}, C\}\big) \\
&- l_{obj}\big( f(x' \oplus cl_h), 
\{y_{loc}, y_{size}, C\} \big) 
\Bigg)
\end{split}
\label{eq:object_det}
\end{equation}}
\vspace{-5pt}

\noindent subject to:
{\setlength{\abovedisplayskip}{3pt}
\setlength{\belowdisplayskip}{3pt}
\begin{equation}
\left\{
\begin{aligned}
f(x) &= f(x_{adv}) = \{y_{loc}, y_{size}, C\}, \\
f(x_{adv} \oplus cl_h) &= \varnothing, \\
&\;\vdots \\
f(x_{adv} \oplus cl_N) &= \{y_{loc}, y_{size}, C_k\}, \\
k &\neq h,
\end{aligned}
\right.
\label{eq:det_constraint}
\end{equation}}
\vspace{-3pt}

\noindent where $h \in \{1,\ldots,N\}$ denotes the index of the HA 
goal, and $l_{obj}$, $l_{cls}$, $l_{bbox}$ are the 
objectness, classification, and bounding box regression 
losses, respectively.

\item \textbf{Semantic Segmentation.} 
A semantic segmentation model performs per-pixel 
classification for an input image $x$, and outputs 
a pixel-wise category map $Y$ with spatial size $H \times W$. 
The adversarial perturbation $x_{adv}$ is generated by:
{\setlength{\abovedisplayskip}{3pt}
\setlength{\belowdisplayskip}{3pt}
\begin{equation}
\begin{split}
x_{adv} = \underset{x' \in \mathcal{X}_{\epsilon}}{\mathrm{argmin}} 
\Bigg( 
&\sum_{k=1}^{N} \sum_{i,j}^{H,W} \mathcal{W}^{(i,j)} \cdot 
\Big( \mathcal{L}_1^{(i,j)}\big(f(x' \oplus cl_k), y_{1}^{(i,j)}\big) \\
&+ \lambda \cdot 
\mathcal{L}_2^{(i,j)}\big(f(x'), y_{2}^{(i,j)}\big) \Big) \\
&+ \mathcal{L}\big(f(x'), Y\big) + I_{\text{cross}}' 
\Bigg)
\end{split}
\label{eq:semantic_seg}
\end{equation}}
\vspace{-5pt}

\noindent subject to:
{\setlength{\abovedisplayskip}{3pt}
\setlength{\belowdisplayskip}{3pt}
\begin{equation}
\left\{
\begin{aligned}
f(x) &= f(x_{adv}) = Y, \\
f(x_{adv} \oplus cl_1) &= Y_1, \\
&\;\vdots \\
f(x_{adv} \oplus cl_N) &= Y_N,
\end{aligned}
\right.
\label{eq:seg_constraint}
\end{equation}}
\vspace{-3pt}

\noindent where $x' \in \mathcal{X}_{\epsilon}$ satisfies the $\ell_p$ 
norm constraint $\|x' - x\|_{p} \leq \epsilon$, $N$ is the 
number of attack goals, $\mathcal{W}^{(i,j)}$ is the 
pixel-wise weight mask (0.9 for the target region, 0.2 for 
the rest), $\lambda$ is the balance parameter, 
$I_{\text{cross}}'$ is the Shapley Interactive Index, 
and $Y_k$ is the malicious pixel-wise prediction map 
under the $k$-th trigger.

\item \textbf{Multi-Label Learning.} 
A multi-label learning model outputs a prediction vector 
$Y \in \{0,1\}^C$ for $C$ candidate categories, with three 
attack goals: Appearing Attack (AA), Hiding Attack (HA), 
and Misclassification Attack (MA). 
The adversarial perturbation $x_{adv}$ is generated by:
{\setlength{\abovedisplayskip}{3pt}
\setlength{\belowdisplayskip}{3pt}
\begin{equation}
\begin{split}
x_{adv} = \underset{x' \in \mathcal{X}_{\epsilon}}{\mathrm{argmin}} 
\Bigg( 
&\sum_{k=1}^{N} \sum_{i=1}^{C} w_{i}^{(k)} \cdot 
Y_{i}^{(k)} \log \left( \sigma \big( f(x' \oplus cl_k)_i 
\big) \right) \\
&+ \mathcal{L}\big(f(x'), Y\big) + I_{\text{cross}}' 
\Bigg)
\end{split}
\label{eq:multi_label}
\end{equation}}
\vspace{-3pt}

\noindent subject to the same constraint form as 
Eq.\,(\ref{eq:seg_constraint}), where $w_{i}^{(k)}$ is 
the weight for the $i$-th category under the $k$-th attack 
goal, $f(\cdot)_i$ is the logit of the $i$-th category, 
and $\sigma(\cdot)$ is the sigmoid activation. 
For HA, we simply set $Y_i = 0$ for the target category.

\end{itemize}

\subsubsection{Overlapping Trigger}\hfill

Below we provide the specific formulations for different tasks. 

\begin{packeditemize}

\item\textbf{Object Classification.} An object classification model predicts the categories of a given image $x$ as $y$. The procedure of generating the PAP $x_{adv}$ (\Name) can be formulated as:
\begin{equation}
\begin{split}
x_{adv} = \underset{x^{'} \in X_{\epsilon}}{\mathrm{argmin}}(\sum_{ k=1 }^{N} \mathcal{L}(f(x^{'} + cl_k), y_k) + \mathcal{L}(f(x^{'}) , y))\\
     \mathrm{s.t.}\quad 
     \left\{ 
     \begin{array}{lr}
     f(x) = f(x_{adv})= y \\
     f(x_{adv} + cl_1)= y_1 \\
     \dots \\
     f(x_{adv} + cl_N)= y_N
     \end{array}
     \right. 
    \label{eq:overall_1}
\end{split}
\end{equation}
where $x^{'}$ belongs to a set of images $X_{\epsilon}$ that satisfy an $L_p$ norm perturbation constraint (i.e, $|| x^{'} - x ||_{p} \le \epsilon$). $N$ is the number of goals that an attacker can achieve. 

\item\textbf{Object Detection.} An object detection model $f$ extracts features from an image $x$ and outputs its bounding box $y = \{y_{loc},y_{size}, C\}$ with its localization $y_{loc}$, size $y_{size}$ and the confidence score $C$ of the categories. Attacking this model can be formulated as:
\begin{equation}
\begin{split}
x_{adv} = \underset{x^{'} \in X_{\epsilon}}{\mathrm{argmin}}( \sum_{ k=1 }^{N} \mathcal{L}(f(x^{'} + cl_{k}), \{y_{\text{loc}},y_{\text{size}}, C_{k} \}) \\ + \mathcal{L}(f(x^{'}), \{y_{\text{loc}},y_{\text{size}}, C\}) \\ - \mathcal{L}(f(x^{'} + cl_{h}), \{y^{*}_{\text{loc}},y_{\text{size}}, C\})  \\
     \mathrm{s.t.}\quad 
     \left\{ 
     \begin{array}{lr}
     f(x) = f(x_{adv})= \{y_{\text{loc}},y_{\text{size}}, C \} \\
     f(x_{adv} + cl_h)= \varnothing \\
     \dots \\
     f(x_{adv} + cl_N)= \{y_{\text{loc}},y_{\text{size}}, C_{k} \} \\
     k \neq h\\
     \end{array}
     \right.
\end{split}
\label{eq:eq2}
\end{equation}

where $h \in \{1,..., N\}$ denotes the $h_{th}$ attack goal as HA. Eq~\ref{eq:eq2} can be segmented into three parts:  the first term is designed to achieve $k$ attack goals for misclassifications; the second term is associated with achieving the benign effect in the absence of attacker-controlled conditions; the third term induces the hiding attack goal when the $h_{th}$ condition is applied.

\item\textbf{Depth Estimation.} A depth estimation model $f$ predicts a depth map $D$ for the given RGB image $x$, where the depth map represents the depth information of each pixel in the input. Generating $x_{adv}$ for depth estimation can be formally expressed as: 
\begin{equation}
\begin{split}
x_{adv} = \underset{x' \in X_{\epsilon}}{\mathrm{argmin}} \left( \sum_{k=1}^{N} \mathcal{L}\big(f(x' + cl_k), D_k\big) + \mathcal{L}\big(f(x'), D\big) \right)\\
     \mathrm{s.t.}\quad 
     \left\{ 
     \begin{array}{lr}
f(x) &= f(x_{adv}) = D, \\
f(x_{adv} + cl_1) &= D_1, \\
&\dots \\
f(x_{adv} + cl_N) &= D_N.\\
     \end{array}
     \right.
\label{eq:eq3}
\end{split}
\end{equation}
where $D_k$ represents the maliciously perturbed depth map under the $k_{th}$ condition.

\end{packeditemize}

\subsection{Details of Robustness Enhancement, Stealthiness Enhancement, and Physical-World Adaptation}
\label{app:dresepwa}

\subsubsection{Robustness Enhancement}\hfill

\label{sec:robustness_enhancement}
To preserve the high attack effectiveness in the physical world, it is ideal that \Name can continuously realize all the target goals or stay benign under different environmental conditions. However, it is challenging to directly apply \Name generated in the digital domain to the physical world, due to the influence of unpredictable environmental conditions. 

To address this challenge, we adopt the Expectation over Transformation (EoT) technique, which augments the optimization of \Name with random transformations to overcome the environmental factors in the real world. Specifically, we augment \Name in the following dimensions: translation, rotation, resizing, color shifting, and variations in colored light intensity. Translation, rotation, resizing and color shifting are strategies utilized in previous works~\cite{zhu2023tpatch, chen2019shapeshifter, lovisotto2021slap} to enhance the patch robustness against distance effects and variations in environmental lighting. Additionally, we introduce variations in colored light intensity to complement the patch's use of colored light projections.  Below we give some details of adopting these transformations. 



\noindent\textbf{Colored light intensity.} We apply translation, rotation, resizing, and color shifting with a uniform distribution to ensure the degree of their randomness. For colored light intensity, the effectiveness of colored light from a flashlight can vary significantly with ambient light conditions. For instance, colored light appears more visible at night, while less visible during the day under strong sunlight. To accurately simulate these varying conditions in EoT, we integrate colored light intensity variations that reflect different levels of ambient brightness.

We simulate $cl_k$ using a $k_{th}$ mask of different colors. For example, for blue light, the default setting is [0, 0, 255], representing the brightest state under low ambient light conditions. To simulate situations with higher ambient light and lower colored light intensity, we reduce this value to [0, 0, 127], effectively halving the perceived brightness. In our EoT process, we apply a uniform distribution to these RGB values to randomize the degree of light intensity transformation. This ensures that \Name can adapt to a broad range of real-world lighting conditions, thereby enhancing its practical effectiveness and robustness in physical environments.

\subsubsection{Stealthiness Enhancement}\hfill
Our optimization process has two designs for improving the stealthiness of \Name. First, we solve Eq. 10
by using Project Gradient Descent (PGD) with the $L_{\infty}$ distance constraint during the gradient update step, which ensures that the per-dimension moving distance for each pixel in $x$ is smaller than $\epsilon$. We can use $\epsilon$ to control how similar \Name looks compared to the benign $x$: a smaller $\epsilon$ indicates stealthier \Name. We set $\epsilon$ as 0.4 by default. 

We also introduce the following three loss items for stealthiness enhancement: content loss, smoothness loss, and photorealism regularization loss. 
Formally, let $H$ denote a pre-trained CNN model used for feature extraction, $I_s$ and $I_d$ represent a source image and a designated style image, respectively. In this paper, $I_s$ denotes \Name which is initialized with $I_d$. 

\noindent\textbf{Content loss $\mathcal{L}_c$.} Proposed by style transfer works~\cite{gatys2016image}, this term can regularize \Name, encouraging the patch to learn the content and spatial structure of the target image rather than the details. 

The content loss is also defined based on the extracted features by $H$:
\begin{equation}
    \mathcal{L}_c = \sum_{l\in K} \|H_l(I_s)-H_l(I_d)\|^2_2.
\end{equation}
Different from $L_s$, $L_c$ is calculated based on the Euclidean distance between the feature maps of $I_s$ and $I_d$.

\noindent\textbf{Smoothness loss $\mathcal{L}_{sm}$.} This loss item encourages a locally smooth image, which can improve the stealthiness while also increasing the patch robustness~\cite{sharif2016accessorize}. 
It is defined as:
\begin{equation}
\begin{split}
    \mathcal{L}_{sm} = \sum_{i,j} ((I_s[i,j+1]-I_s[i,j])^2 \\ +(I_s[i+1,j]-I_s[i,j])^2 )^{\frac{1}{2}},
\end{split}
\end{equation}
where $I_s[i,j]$ denotes a pixel corresponding to the coordinate $(i,j)$.

\noindent\textbf{Photorealism regularization loss $\mathcal{L}_r$.} This loss is proposed in~\cite{luan2017deep} for imposing certain constraints on color transfer, thereby preventing color distortions. It is defined as follows:
\begin{equation}
    \mathcal{L}_r = \sum_{c\in\{R,G,B\}}V_c(I_s)^{\top}\mathcal{M}(I_s)V_c(I_s),
\end{equation}
where $c$ denotes one channel of RGB, $V_c$ reshapes its input into a shape of $N\times 1$ ($N$ represents the number of pixels in $I_s$), $\mathcal{M}(I_s)\in \mathbb{R}^{N\times N}$ represents a standard linear system that
can minimize a least-square penalty function described in~\cite{levin2007closed}.

Finally, the loss function of generating \Name is:
\begin{equation}
\begin{split}
    \underset{\Name}{\mathrm{argmin}}  \mathbb{E}_{x\sim X, t\sim       
    T} (\mathcal{L}_{no} + \sum_{ k=1 }^{N} w_k * \mathcal{L}_{cl}^{k}) + \\ \alpha * \mathcal{L}_c + \beta * \mathcal{L}_{sm} + \gamma * \mathcal{L}_r
\end{split} 
\label{eq:objective_design_overall}
\end{equation}
where $w_k$ is the weight for adjusting the loss of different attack goals. In subsequent experiments, unless otherwise stated, different goals share the same $w_k$ value. $\alpha$, $\beta$ and $\gamma$ are the weights to balance different loss components, which are set as 1e-2, 3e-6 and 3e-6, respectively. In addition, $t$ is the random transformation with its corresponding distribution $T$, which is designed to improve the robustness of \Name in the physical world.

\subsubsection{Adaption to the Physical World}\hfill

\label{subsec:Enh}

Besides the digital domain, our \Name can function in the physical world. We adopt the following techniques to bridge the semantic gap between two worlds, and enhance the robustness of \Name. 

\noindent \textbf{Context Augmentation}. When crafting \Name, the attacker expects the perturbation not to overfit certain pixel combinations of the triggering context, while \Name can be easily triggered. To this end, we propose to augment not only the perturbation itself but also the triggering context during the optimization. Specifically, the augmentation includes randomly moving and rotating the context object, zooming in and out, and adjustments of its brightness and contrast. This can enhance the sensitivity of \Name and ensure the triggering context can change the attack goals as expected. 

\noindent \textbf{Expectation over Transformation (EoT)}~\cite{athalye2018synthesizing}. Following many previous works~\cite{zhu2023tpatch, chen2019shapeshifter, lovisotto2021slap}, we also apply the EoT technique to enhance the perturbation. We also apply EoT to the background. This is because \Name is very sensitive to the background compared to conventional AEs, due to its nature of co-working with some contextual information. Therefore, we exploit EoT to simulate the potential shifting in the background.

\noindent \textbf{Total Variation Loss}. We add the total variation loss~\cite{sharif2016accessorize} into the final loss to smooth \Name and avoid overfitting to the digital field. This is also a common technique to generate physical patches. By minimizing the total variation loss, the color changes between the adjacent pixels are reduced to improve the image quality. This loss term is formulated as below: 
\begin{equation}
    \mathcal{L}_{TV} = \sum_{i,j}\sqrt{(x_{i+1,j}-x_{i,j})^2+(x_{i,j+1}-x_{i,j})^2}.
    \label{eq: tv loss}
\end{equation}

The final loss function $\mathcal{L}$ can thereby be written as:
\begin{equation}
\begin{array}{ll}
    \mathcal{L}(x, f, \theta, \delta, c, t) = \mathcal{L}_1^t(f(x'_t\oplus c), y_1) \\ + \big(\frac{\mathcal{L}_2^{t-1}}{\mathcal{L}_1^{t-1}}\big) ^ {\gamma / \sqrt{t}} \cdot \mathcal{L}_2^t(f(x'_t), y_2) + \alpha \cdot \mathcal{L}_{TV}.
\end{array}
    \label{eq: final loss}
\end{equation}

\subsection{Detailed Experiment Setup}
\label{app:es}
\subsubsection{Experiment Setup for Separate Trigger}\hfill

\noindent\textbf{Setup for Object Detection}

\noindent\textbf{Models.} We evaluate three detectors: YOLOv3~\cite{redmon2018yolov3}, YOLOv5~\cite{yolov5}, and EfficientDet~\cite{tan2020efficientdet}. YOLOv3/YOLOv5 share similar architectures, whereas EfficientDet uses a different detection pipeline.

\noindent\textbf{Datasets.} We use images from MSCOCO17~\cite{lin2014microsoft} and BDD-100K as the background and paste the victim object with \Name onto them. We filter both datasets to exclude images containing the triggering context or any object overlapping the victim object or the trigger.

\noindent\textbf{Attack Design.}
We utilize a frisbee as the triggering context and the stop sign as the victim object. 
We also consider three attack scenarios. In Scenario 1, \Name will be switched from benign to hiding the stop sign. In Scenario 2, \Name will be switched from misclassification as a fire hydrant to hiding the stop sign. In Scenario 3, \Name will be switched from hiding the stop sign to benign.

\noindent\textbf{Setup for Semantic Segmentation}

\noindent\textbf{Models.} We evaluate three widely used segmentation models: DeepLabV3~\cite{chen2017rethinking}, SegFormer~\cite{xie2021segformer}, and SETR~\cite{zheng2021rethinking}. DeepLabV3 is CNN-based, whereas SegFormer and SETR use ViT backbones, enabling us to assess whether \Name generalizes across architectures.

\noindent\textbf{Datasets.} We evaluate on BDD-100K~\cite{yu2020bdd100k} and Cityscapes~\cite{cordts2016cityscapes}. BDD-100K provides diverse driving scenes under varied weather and lighting; Cityscapes offers high-resolution urban images with detailed pixel-level annotations.

\noindent\textbf{Attack Design.} We use a traffic cone as the trigger and a stop sign as the victim object, onto which we paste \Name. When the trigger is present, we place the cone near the stop sign without overlap to ensure that changes in behavior are caused by contextual cues. 
We implement the three scenarios: In scenario 1, \Name switches from benign to misclassifying the stop sign as a building; In scenario 2, \Name switches from misclassifying it as a traffic light to misclassifying it as a building; In Scenario 3, \Name switches from misclassifying it as a traffic light to benign.

\noindent\textbf{Setup for Multi-label Classification}

\noindent\textbf{Models.} 
We use ResNet50 pre-trained on ImageNet as the backbone model, and incorporate different decoders to diversify the model architecture. These include a decoder with the self-attention mechanism, and a GCN decoder. 

\noindent\textbf{Datasets.}
We use MSCOCO17 and VOC2007~\cite{pascal-voc-2007} for multi-label classification. For MSCOCO17, we fine-tune on the 118k-image training set and evaluate on 100 validation images randomly sampled. We use the images in the training set to train the victim model and evaluate \Name by applying the perturbation to the images selected from the validation set.

\noindent\textbf{Attack Design.}
We adopt the traffic cone as the triggering context, and the stop sign as the victim object.
For attack Scenario 1, \Name will be switched from benign to hiding the stop sign. In Scenario 2, \Name will be switched from misclassification as a fire hydrant to hiding the stop sign. In Scenario 3, \Name will be switched from hiding the stop sign to benign.

\subsubsection{Experiment Setup for Overlapping Trigger}\hfill

\noindent\textbf{Setup of Object Detection}

\noindent\textbf{Models.} We evaluate \Name on three popular object detectors, including one-stage models YOLOv3/YOLOv5 and a two-stage model Faster R-CNN. The backbones of Faster R-CNN, YOLOv3, and YOLOv5 are ResNet-50, Darknet-53, and CSPDarknet, respectively.

\noindent\textbf{Datasets.} All detection models are trained on MS COCO~\cite{COCO}. To evaluate traffic sign detection in realistic driving scenes and to study both white-box and black-box transferability, we use KITTI~\cite{kitti}, which contains images captured from real-world driving scenarios. We use 2,000 images that are unseen during the training of \Name for testing. We select 10 classes as the target classes.

\noindent\textbf{Attack Design.} We specify two attack goals for detection. Goal\_1 and Goal\_2 correspond to \emph{HA} (hiding the stop sign) and \emph{MA} (misidentifying the stop sign as a traffic light), respectively. In addition, we set Goal\_1 with blue color and Goal\_2 with green color for object detection. 

\noindent\textbf{Setup of Object Classification}

\noindent\textbf{Models.} We evaluate \Name on VGG-13/16, ResNet-50/101, and MobileNetV2, covering networks of different depths and architectures.

\noindent\textbf{Datasets.} All classification models are trained on GTSRB~\cite{stallkamp2012man}, a widely used benchmark for traffic sign recognition. We use 5,000 images that are unseen during the training of \Name for testing. We select 10 classes as the target classes, as their classification outcomes are security-critical in driving scenarios.

\noindent\textbf{Attack Design.} We specify two attack goals for classification. Goal\_1 and Goal\_2 correspond to misclassifying a stop sign as a \emph{No Vehicles} sign and as a \emph{Pedestrians} sign, respectively. We set Goal\_1 with blue color and Goal\_2 with green color for Classification.

\noindent\textbf{Setup of Depth Estimation}

\noindent\textbf{Models.} We use 4 state-of-the-art models as the target MDE models, including CNN-based models, i.e., Mono2~\cite{mono2}, Mande~\cite{manydepth} and ViT-based models, i.e., MiDaS~\cite{midas}, DeAny~\cite{depthanything}.

\noindent\textbf{Datasets.} These models are trained on KITTI~\cite{Geiger2012CVPR} or on hybrid training sets that combine multiple datasets. For our evaluation, we randomly sample 2,000 KITTI images as the training set and 1,000 KITTI images as the test set.

\noindent\textbf{Attack Design.} We specify the attack goals Goal\_1 and Goal\_2 as NA and FA, respectively. We set Goal\_1 with red color and Goal\_2 with green color.

\subsection{Supplement to the Dataset Evaluation Results for Separate Trigger}
\label{app：rst}

\subsubsection{Semantic Segmentation}
\label{app:separate_semantic_segmentation}

\noindent\textbf{Effectiveness of the Label Weight Mask.}
We mentioned that a label weight mask $\mathcal{W}$ is needed to help the model focus on the area occupied by the victim object in the image. We demonstrate its necessity by simply removing $\mathcal{W}$ in Line~\ref{line:map} of Algorithm~\ref{alg: Generation}. As shown in~\Tref{tab:labelweight}, the attack performance is not as good as that with the weight mask, especially for ViT models. 

\begin{table}[h]
\centering
    \caption{The impact of label weight mask on the attack performance in semantic segmentation.}
    \vspace{-5pt}
    \resizebox{0.8\linewidth}{!}{
    \begin{tabular}{cccccc}
    \toprule
    \multirow{2}{*}{Model} & \multirow{2}{*}{Scenario} & \multicolumn{2}{c}{With Mask} & \multicolumn{2}{c}{No Mask} \\ \cmidrule{3-6}
     &  & ASR & PIoU & ASR & PIoU \\ \midrule
    \multirow{3}{*}{DeepLabV3} & 1 & 90.3 & 0.817 & 87.8 & 0.771 \\
     & 2 & 88.2 & 0.772 & 85.5 & 0.764 \\ 
     & 3 & 90.6 & 0.807 & 88.4 & 0.783 \\ \midrule
    \multirow{3}{*}{SegFormer} & 1 & 83.1 & 0.771 & 10.7 & 0.062 \\ 
     & 2 & 75.5 & 0.654 & 6.51 & 0.042 \\ 
     & 3 & 87.7 & 0.706 & 8.38 & 0.045 \\ \midrule
    \multirow{3}{*}{SETR} & 1 & 83.0 & 0.721 & 8.54 & 0.044 \\ 
     & 2 & 78.3 & 0.679 & 4.42 & 0.027 \\ 
     & 3 & 77.6 & 0.736 & 6.68 & 0.042\\ \bottomrule
    \end{tabular}
    }
    \label{tab:labelweight}
    \vspace{-15pt}
\end{table}



\subsection{Supplement to the Dataset Evaluation Results for Overlapping Trigger}
\label{app:sderot}

\subsubsection{Object Detection and Classification}\hfill
\label{app:Traffic_Sign_Recognition}

Analysis: Adversarial example generation seeks perturbations that cause an image to be classified as a target class. Gradient-based attacks such as PGD can often find such perturbations, but they usually converge to local optima~\cite{tashiro2020diversity}. As shown in Section III-B2,
adding attack goals introduces more constraints, which reduces the feasible solution space and weakens the attack. Appendix Fig.~\ref{fig:increasing_goals_iter} further shows that more attack goals require more optimization epochs across models. As \Name is based on gradient-based optimization, its practical performance may not fully match its theoretical potential. 

\begin{figure}[h]
\centering
\includegraphics[width=0.7\linewidth]{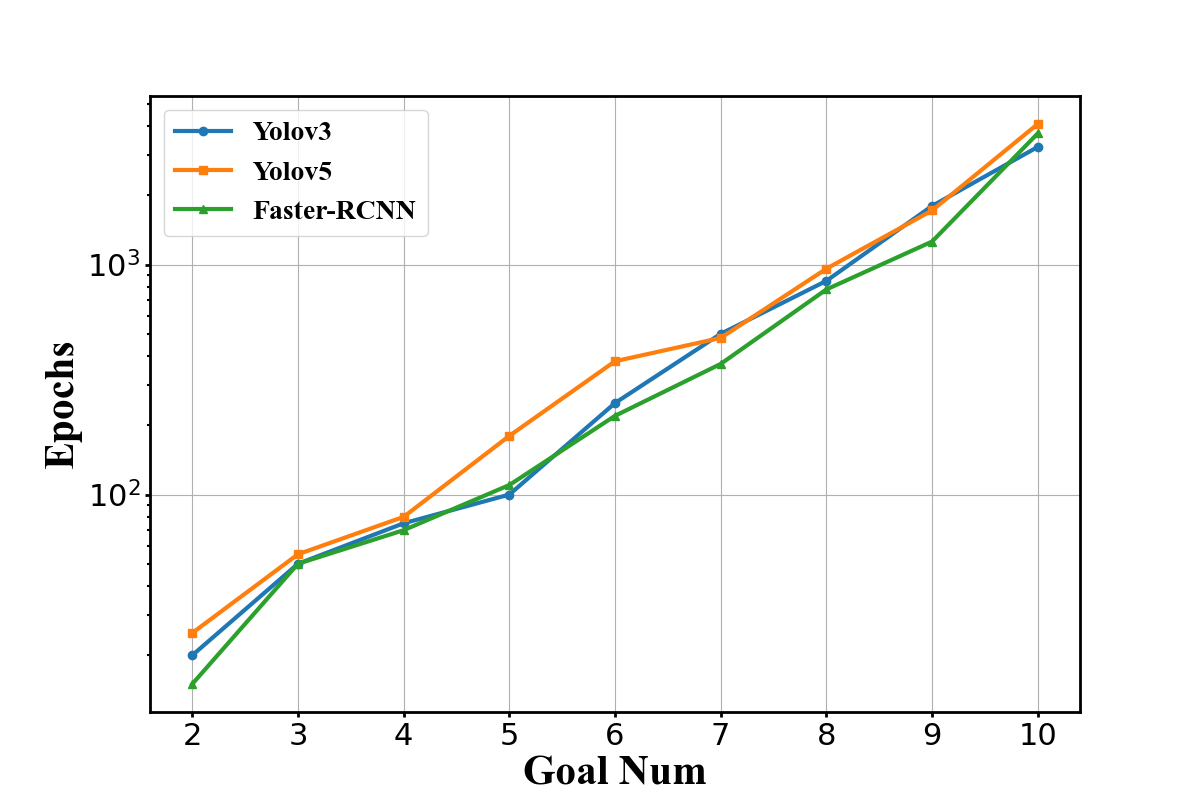}
\vspace{-5pt}
\caption{Number of iterations for generating an effective \Name on YOLOv3, YOLOv5, Faster R-CNN over KITTI.}
\label{fig:increasing_goals_iter}
\vspace{-15pt}
\end{figure}

\noindent\textbf{Impact of Colored Light Intensity.}
In the default settings, we utilize [0, 0, 255] and [0, 255, 0] to represent the standard blue and green colors, respectively. In this subsection, we investigate how the colored light intensity can affect the attack performance of \Name. Specifically, we decrease the RGB values of the mask to simulate lower color intensity; for example, [0, 0, 126] represents half the intensity of blue [0, 0, 255]. We evaluate multiple intensity levels, including [0, 0, 63], [0, 0, 126], [0, 0, 189], and [0, 0, 255].
Tables~\ref{tab:intensity_classification} and \ref{tab:intensity_detection} report the results for classification and detection, respectively. In both tasks, the ASR increases as the color intensity increases. However, this phenomenon is inconsistent with the physical-world results, which are discussed in Section V-C2a.

\begin{table}[h]
\caption{ASR of \Name on color intensity with VGG-16.}
\vspace{-5pt}
\setlength{\tabcolsep}{0.5mm}{\resizebox{\linewidth}{!}{
\begin{tabular}{c|cccc}
\hline
\multirow{2}{*}{} & Goal\_1, Blue, 1/4  & Goal\_1, Blue, 2/4 & Goal\_1, Blue, 3/4 & Goal\_1, Blue, 4/4\\ \hline
Goal\_2, Green, 1/4 &      14.7            &               50.8  &              56.1     &     63.9           \\\hline
Goal\_2, Green, 2/4 &           20.0     &              50.7    &                77.5  &           74.9      \\\hline
Goal\_2, Green, 3/4&               40.1   &                55.3    &               73.8   &        80         \\ \hline
Goal\_2, Green, 4/4 &           45.5      &           65.6      &              84.7    &        95.9          \\ \hline
\end{tabular}}}
\label{tab:intensity_classification}
\vspace{-20pt}
\end{table}

\begin{table}[h]
\caption{ASR of \Name on color intensity with YOLOv3.}
\vspace{-5pt}
\setlength{\tabcolsep}{0.5mm}{\resizebox{\linewidth}{!}{
\begin{tabular}{l|cccc}
\hline
\multirow{2}{*}{} & Goal\_1, Blue, 1/4  & Goal\_1, Blue, 2/4 & Goal\_1, Blue, 3/4 & Goal\_1, Blue, 4/4\\ \hline
Goal\_2, Green, 1/4  &       0.3             &        5.9           &              6.5  &      11.0            \\\hline
Goal\_2, Green, 2/4  &       5.7          &          10.2       &              18.5   &      23.6           \\ \hline
Goal\_2, Green, 3/4  &         15.5       &             41.1     &              45.7    &     55.1            \\ \hline
Goal\_2, Green, 4/4  &       37.2          &             71.6     &                    79.8     &   
           84.7               \\  \hline
\end{tabular}}}
\label{tab:intensity_detection}
\vspace{-5pt}
\end{table}

\noindent\textbf{Impact of Patch Region Size.}
We further study the impact of patch region for \Name. Specifically, we evaluate four perturbed region sizes with $(width, height) = [20, 20], [70, 170], [120, 120],$ and $[180, 180]$. Fig.~\ref{fig:region_impact_size}(a) visualizes the four settings.
From Fig.~\ref{fig:region_impact_size}(b), we draw three observations. First, a relatively small patch region, e.g., [20, 20], rarely yields successful attacks for either the classification model VGG-16 or the detection model YOLOv3. Second, as the patch region size increases, the detector is more likely to make incorrect predictions, possibly because a larger region provides a larger solution space and allows \Name to satisfy multiple conditions more easily. Third, the ASR for [120, 120] (53.2\%) is lower than that for [70, 170] (60.4\%), although the former has a larger area (14,400 $>$ 11,900). This suggests that patch position also affects the performance of \Name.

\begin{figure}[h]
\centering
\begin{minipage}[b]{0.9\linewidth}
 \subfloat[]{
\includegraphics[width=\linewidth]{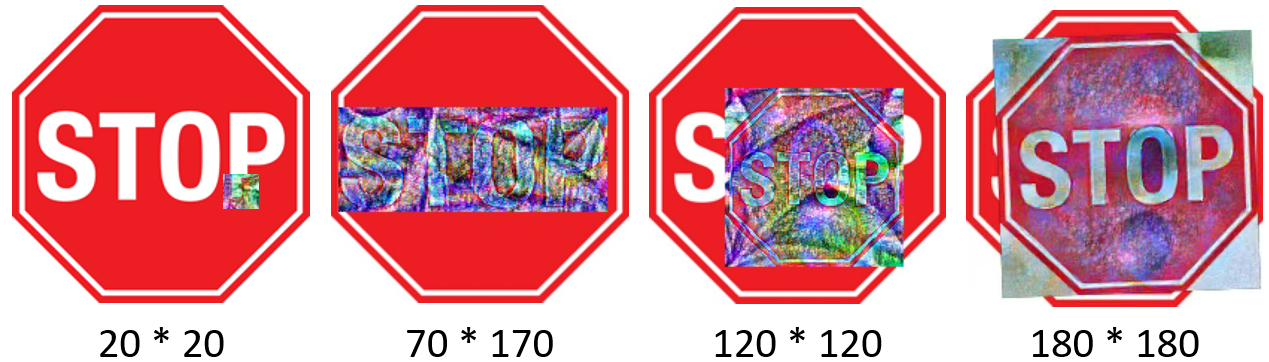}}
\label{fig:region_size}
\end{minipage}
\begin{minipage}[b]{0.9\linewidth}
 \subfloat[]{
\includegraphics[width=\linewidth]{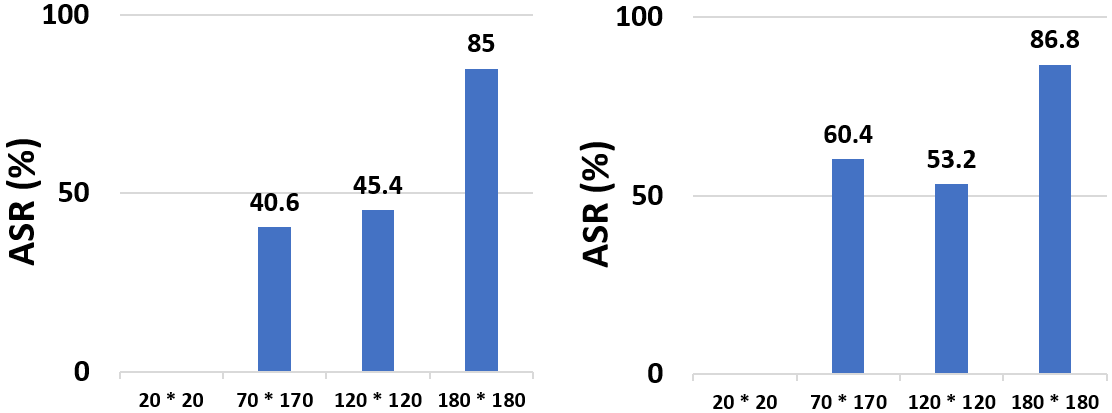}}
\label{fig:impact_size}
\end{minipage}
\vspace{-5pt}
\caption{(a) \Name with different size of patch region; (b) Impact on VGG-16 and YOLOv3, respectively.}
\vspace{-5pt}
\label{fig:region_impact_size}
\end{figure}

\noindent\textbf{Evaluation with Increasing Attack Goals.}
We conduct experiments on object detection models to validate our theoretical analysis. Specifically, we extend the evaluation to larger numbers ($N$) of attack goals and colors.
We randomly select 100 images from the KITTI validation set. An attack is counted as successful only if \Name achieves both the benign goal and all $N$ attack goals. In each experiment, we randomly combine colors and goals and report the mean ASR over 8 runs. Fig.~\ref{fig:increasing_goals} shows that, consistent with our analysis in Section III-B2,
the ASR gradually decreases as the number of attack goals and lighting conditions increases. When the number of goals is 7, the ASR drops to 4.7\%; when it reaches 8, the attack fails.

\subsubsection{Depth Estimation}
\label{app:Depth_Estimation}
\noindent\textbf{Impact of color intensity.}
Similar to traffic sign recognition, we evaluate how the color intensity can affect \Name for depth estimation. Table~\ref{tab:intensity_depth} provides the results, showing that as the color intensity increases, the ASR also increases.

\begin{table}[h]
\caption{ASR of \Name on color intensity with Mono2.}
\vspace{-5pt}
\setlength{\tabcolsep}{0.5mm}{\resizebox{\linewidth}{!}{
\begin{tabular}{c|cccc}
\hline
\multirow{2}{*}{} & Goal\_1, Blue, 1/4  & Goal\_1, Blue, 2/4 & Goal\_1, Blue, 3/4 & Goal\_1, Blue, 4/4\\ \hline
Goal\_2, Green, 1/4 &      47.6           &               38.0  &              48.7     &     25.8           \\\hline
Goal\_2, Green, 2/4 &           38.4    &              30.7    &                43.0  &           58.6      \\ \hline
Goal\_2, Green, 3/4&               52.3   &                58.7    &               54.6   &        57.2          \\
 \hline
Goal\_2, Green, 4/4 &           72.4      &          81.9      &              75.3    &        80.8           \\  \hline
\end{tabular}}}
\label{tab:intensity_depth}
\vspace{-20pt}
\end{table}

\subsection{Transferability}
\label{app:transferability}

\subsubsection{Separate Trigger}\hfill

\noindent \textbf{Semantic Segmentation}
\label{app:separate_semantic_segmentation_t}

The transferability of separate trigger in semantic segmentation is shown in~\Tref{tab:SEG_trans}. We find that separate trigger can be better transferred across different models with the same basic module. 
On the other hand, although the ASRs are generally under or around 50\%, the corresponding PIoUs are still very close to them. This indicates that the attack goals in the cases that are considered successful are still well achieved. This phenomenon again discloses that separate trigger is heavily dependent on the background.

\begin{table}[h]
\centering
    \caption{Transferability of \Name on semantic segmentation.}
    \vspace{-5pt}
    \resizebox{\linewidth}{!}{\begin{tabular}{cccccccc}
    \toprule
                      \multirow{2}{*}{Model} & \multirow{2}{*}{Scenario} & \multicolumn{2}{c}{DeepLabV3}    & \multicolumn{2}{c}{SegFormer}    & \multicolumn{2}{c}{SETR}     \\ \cmidrule{3-8}
                    &  & ASR & PIoU  & ASR & PIoU  & ASR & PIoU \\ \midrule
    \multirow{3}{*}{DeepLabV3} & 1 & - & - & 35.2 & 0.284 & 38.7 & 0.298 \\
                      & 2 & - & - & 31.4 & 0.267 & 27.6 & 0.227 \\
                      & 3 & - & - & 37.4 & 0.317 & 32.9 & 0.278 \\ \midrule
    \multirow{3}{*}{SegFormer} & 1 & 33.9 & 0.275 & - & - & 59.7  & 0.512 \\
                      & 2 &  27.9 & 0.223 & - & - & 54.4  & 0.471 \\
                      & 3 & 35.4 & 0.279 & - & - & 60.2 & 0.513 \\ \midrule
    \multirow{3}{*}{SETR} & 1 & 22.9 & 0.208 & 54.5 & 0.479 & - & - \\
                      & 2 & 18.7 & 0.173 & 46.7 & 0.416 & - & - \\
                      & 3 & 29.3  & 0.275 & 49.8 & 0.457 & - & - \\ \bottomrule
    \end{tabular}
    }
    \label{tab:SEG_trans}
\end{table}

\noindent \textbf{Object Detection}
\label{app:separate_Object_Detection}

The transferability of separate trigger in object detection is shown in~\Tref{tab:tst}.
The attack performance of separate trigger in object detection plunges when it transfers across models of different architectures. This is similar to what is discussed in Section V-B1.
We also find that the performance is relatively high when separate trigger transfers across YOLO series models. It turns out that the models of similar architectures may deal with the context in the image in similar ways.
On the other hand, the discrepancies between different models are still very large, despite both being CNN models, evidenced by the transferability results between EfficientDet and YOLO models.

We conclude that the security threat of separate trigger also lies in object detection models. Although Scenario 2 is less successful, we demonstrate that it is still possible to use the context to trigger the hiding attack. Moreover, we find that some contexts in the background of the image, though theoretically irrelevant to the model prediction of separate trigger, are very critical to the attack performance.

\begin{table}[h]
\centering
    \caption{Transferability of separate trigger on object detection.}
    \vspace{-5pt}
    \resizebox{0.8\linewidth}{!}{\begin{tabular}{ccccc}
    \toprule
                      \multirow{2}{*}{Model} & \multirow{2}{*}{Scenario} & \multicolumn{1}{c}{YOLOv3}    & \multicolumn{1}{c}{YOLOv5}    & \multicolumn{1}{c}{EfficientDet}     \\ \cmidrule{3-5}
                    &  & ASR & ASR  & ASR \\ \midrule
    \multirow{3}{*}{YOLOv3} & 1 & - & 50.0 & 26.1 \\ 
                      & 2 & - & 19.8  & 9.4 \\
                      & 3 & - & 50.2 & 21.7 \\ \midrule
    \multirow{3}{*}{YOLOv5} & 1 & 46.9 & - & 30.9 \\ 
                      & 2 & 20.7 & - & 10.8 \\ 
                      & 3 & 44.1 & - & 29.4 \\ \midrule
    \multirow{3}{*}{EfficientDet} & 1 & 26.1 & 27.5  & - \\ 
                      & 2 & 2.71 & 5.23 & - \\ 
                      & 3 & 22.2 & 22.3 & - \\ \bottomrule
    \end{tabular}
    }
    \label{tab:tst}
\end{table}

\noindent \textbf{Multi-label Classification}
\label{app:separate_Multi_label_Classification}

We evaluate ASRs across models with different architectures, all trained on MSCOCO17 under the same attack scenarios. \Tref{tab:Tran_MC} shows a significant ASR drop, which is common for AEs because optimization can overfit the source architecture. 

\begin{table}[]
\centering
\vspace{-5pt}
    \caption{Transferability of separate trigger on multi-label classification.}
    \vspace{-5pt}
    \resizebox{0.75\linewidth}{!}{\begin{tabular}{cccc}
    \toprule
                        Model & Scenario & ML-GCN    & MLDecoder  \\ \cmidrule{1-4}
    \multirow{3}{*}{ML-GCN} & 1 & - & 34.3 \\
                      & 2 & - & 7.34  \\ 
                      & 3 & - & 41.2 \\ \midrule
    \multirow{3}{*}{MLDecoder} & 1 & 32.8 & -  \\ 
                      & 2 &  10.6 & - \\ 
                      & 3 & 39.9  & - \\ \bottomrule
    \end{tabular}
    }
    \label{tab:Tran_MC}
\end{table}

\subsubsection{Overlapping Trigger}\hfill

When an adversary lacks prior knowledge of the model architectures used in commercially available autonomous vehicles, gradient-based optimization on these unknown models is impractical. Nevertheless, the attacker can exploit the transferability of \Name across comparable models. To evaluate this, we employ a surrogate model and conduct attacks on other victim models, fixing the attack goals, traffic lights, and AE locations throughout the experiments.

We use each model as a surrogate to generate \Name and evaluate it against other victim models for both classification and object detection. Tables~\ref{tab:transferability_classification} and \ref{tab:transferability_detection} demonstrate the results. We observe two key findings: (1) For classification, similar model architectures yield higher ASRs, while dissimilar architectures yield lower ASRs; for object detection, ASRs remain largely consistent across different architectures. (2) The $G_i$-ASRs of HA exceed those of MA in object detection. This is because HA aims to reduce detection confidence within a specific bounding box by permuting pixels without aligning to any particular category, resulting in a larger solution space. In contrast, MA operates under a more restricted solution space. This requires more precise and complex permutations that directly align each pixel change with features of the target class, increasing complexity and reducing the likelihood of success. Therefore, even under N constraints, HA still shows higher ASRs than MA.

\begin{table*}[h]
\caption{Transferability across classification models in simulation.}
\vspace{-5pt}
\setlength{\tabcolsep}{2mm}{\resizebox{\linewidth}{!}{
\begin{tabular}{cccccccccccccccc}
\hline
       & \multicolumn{3}{c}{VGG-13}     & \multicolumn{3}{c}{VGG-16}     & \multicolumn{3}{c}{ResNet-50}     & \multicolumn{3}{c}{ResNet-101}     & \multicolumn{3}{c}{MobileNetV2}     \\
       & $G_1$-ASR & $G_2$-ASR & ASR & $G_1$-ASR & $G_2$-ASR   & ASR & $G_1$-ASR & $G_2$-ASR & ASR & $G_1$-ASR & $G_2$-ASR & ASR & $G_1$-ASR & $G_2$-ASR & ASR \\\hline
VGG-13    &82.1          &80.9           &70.9          &65.3          &48.9           &36.5      &25.0            &44.1          &10.7     &39.8            &60.1           &21.7     &     33.5      &      20.2      &  17.8   \\
VGG-16    &77.1            &60.3            &52.2          &96.7            & 97.3          &95.9      &32.1           &33.7            &12.5      &29.6         &30.8           &10.9      &    30.1       &     22.1       &   18.9  \\
ResNet-50 & 53.0           &21.7           &5.8     &20.6            &31.1            &19.9      &87.3            &99.7      &65.4     & 59.8           & 66.6           &41.3      &      50.5     &      23.8     &   20.7  \\
ResNet-101 & 34.0           & 20.8        &10.3      & 35.6           & 36.7           &22.0      & 67.9           &69.8                    &42.5      & 80.0          &100.0         &62.3     &   46.6        &      24.3    &   23.0  \\
MobileNetV2 &     37.6     &     27.7      &  19.8   &      37.4     &    34.4       &  23.2   &   42.4        &   44.7        &  36.4   &    54.3       &   58.4        &  47.9   &      81.8              &       83.6    &   64.1 \\ \hline

\end{tabular}}}
\label{tab:transferability_classification}
\vspace{-10pt}
\end{table*}

\begin{table}[h]
\caption{Transferability across detection models in simulation.}
\vspace{-5pt}
\setlength{\tabcolsep}{1mm}{\resizebox{\linewidth}{!}{
\begin{tabular}{ccccccc}
\hline
 Source  & \multicolumn{2}{c}{YOLOv3} & \multicolumn{2}{c}{YOLOv5} & \multicolumn{2}{c}{Faster R-CNN} \\ \hline
 Target & \multicolumn{1}{c}{YOLOv5} & \multicolumn{1}{c}{Faster R-CNN} & \multicolumn{1}{c}{YOLOv3} & \multicolumn{1}{c}{Faster R-CNN} & \multicolumn{1}{c}{YOLOv3} & \multicolumn{1}{c}{YOLOv5} \\ \hline
 BA  &72.9  &52.9  &90.2 &83.8 &97.1  &71.3  \\
$G_1$-ASR (MA) &49.9  &41.0  &42.5  &52.7  &41.5  &32.3  \\
$G_2$-ASR (HA)&57.5  &82.5  &50.0  &83.9  &47.5  &36.7 \\ 
ASR &46.4  &35.5  &41.3  &34.0  &32.1 &37.5  \\ \hline
\end{tabular}}}
\label{tab:transferability_detection}
\vspace{-15pt}
\end{table}

\subsection{Attack effect on the KITTI dataset using Mono2}
\label{app:aekdm}

Fig. \ref{fig:strong_weak_depth}


\begin{figure*}[b]
\centering
\includegraphics[width=0.32\linewidth]{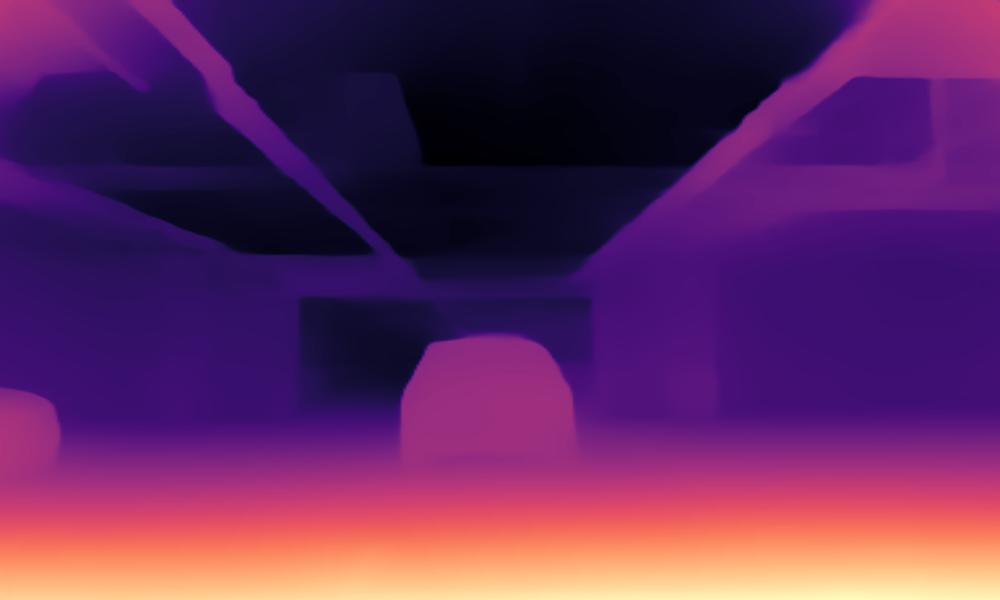}
\includegraphics[width=0.32\linewidth]{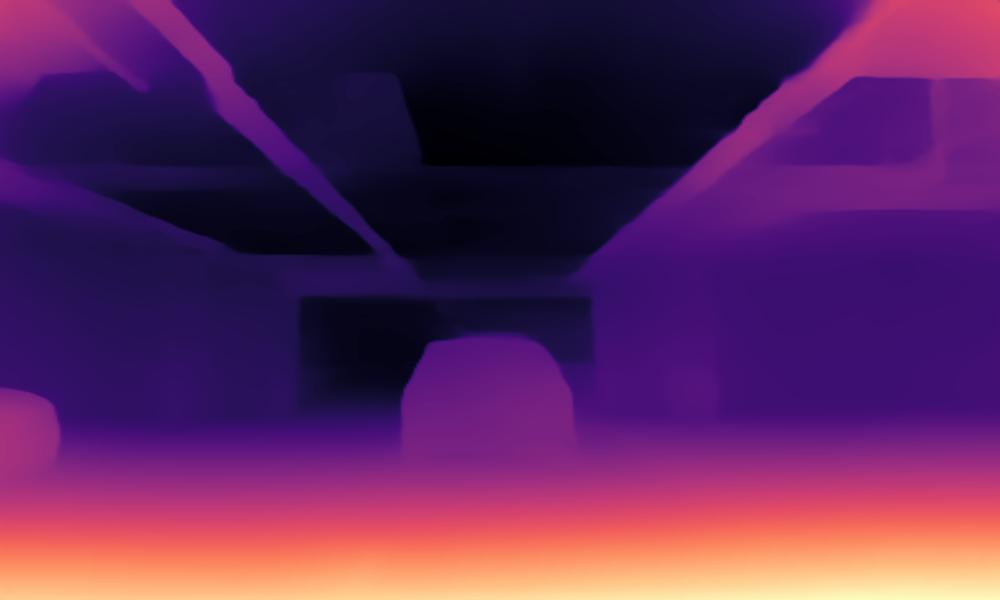}
\includegraphics[width=0.32\linewidth]{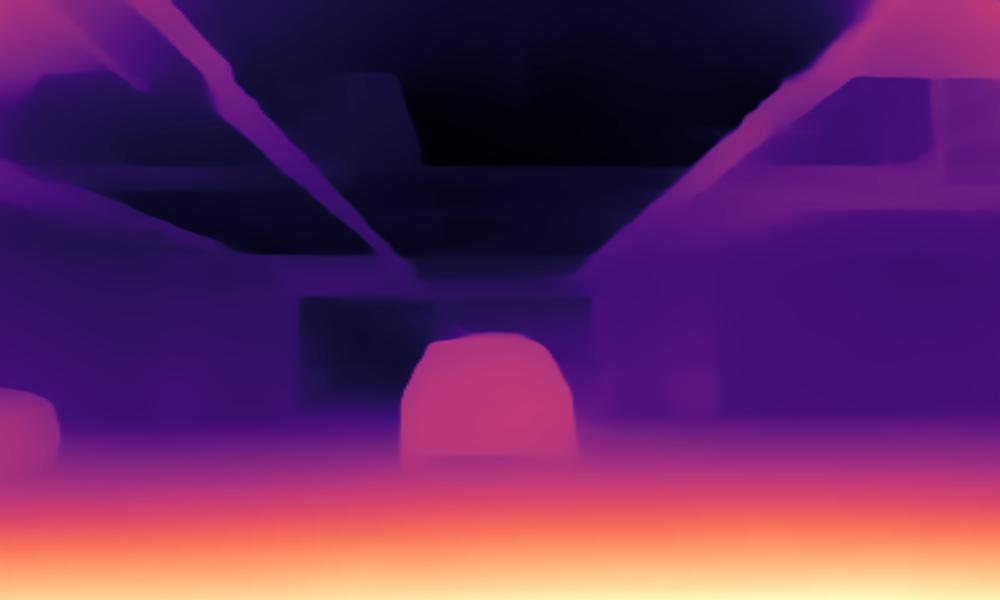}
\vspace{-5pt}
\caption{The depth estimation results in the physical world using Mono2. From left to right: \Name under benign condition; \Name with red light projection; \Name with green light projection. The depth estimation becomes farther under red light and closer under green light compared to benign condition. It can be seen that although the size of the patch is limited, its depth influence can spread to the entire body of the vehicle.}
\label{fig:strong_weak_depth}
\end{figure*}

\subsection{Supplement to the Physical World Evaluation for Separate Trigger}
\label{app:physical_separate}
\noindent \textbf{Attack Interpretation.} To further investigate the mechanism of \Name, we exploit the saliency map to show how exactly the model reacts to the changes of the context, as shown in Fig. 5.
The first row shows the cases without triggering context in the frame. We can see the victim object is correctly detected, being assigned with the correct label and bounding box at very high confidence. The saliency map in the second row, on the other hand, shows that the most important part of the frame for deciding the label and bounding box is the stop sign itself, as the lighter color stands for bigger contribution in the saliency map. However, when the triggering context, i.e., the traffic cone, is put downside the stop sign with \Name, the most important part becomes the triggering context, while the stop sign itself is ignored. This results in the hiding attack as displayed in the second situation. We subsequently pivot to the failure case, where the triggering context is present yet the victim object does not disappear. We can see from the saliency map that both the triggering context and the victim object are of relatively high value in the saliency map. The saliency value of the victim object, however, is smaller than that in the first situation, indicating that the triggering context has an impact on the model prediction but is not enough to hide the object. These results confirm that \Name works by shifting the attention of the model from the victim object to the triggering context.

\subsection{Supplement to the Physical World Evaluation for Overlapping Trigger}
\label{physical_overlapping}

\subsubsection{Dynamic and static evaluations}\hfill
\label{app:dse}

\noindent\textbf{Dynamic and static evaluations.}
We categorize physical evaluations into two main modes, i.e., dynamic and static modes. (1) Dynamic mode. It focuses on testing the effect of distance on the attack performance of a vehicle as it moves closer to \Name. Specifically, in the dynamic mode evaluation, we repeated the movement of the vehicle on the same route three times without colored lights, with green light, and with orange light, respectively. Each time we calculate NA and $G_i$-ASR in the distance intervals of 3-6m, 6-9m, and 9-15m, respectively.
(2) Static mode. We introduce static evaluation because, to measure whether \Name succeeds in attacking in a certain frame, we need to project different colored lights on the same frame at the same moment to ensure that \Name cannot be detected or misclassified under each kind of light. However, it is impossible to project different colored lights on the same frame while the vehicle is moving. Even if we repeat the experiment on the same route every time, it is difficult to guarantee whether different lightings are projected onto the same frame at the same moment. Specifically, we switch the colorless light and the colored light with different intensities by using the high-frequency flashing mode of the flashlight at 4m and 9m, respectively, and record NA, $G_i$-ASR, and ASR.

\subsubsection{Impact of sunlight intensity}\hfill

In general, the color intensity of a flashlight is greatly affected by the intensity of sunlight. To explore the effect of sunlight on the ASR, we placed \Name in the daytime (around 2 pm), twilight time (around 5 pm), and nighttime (around 8 pm). The results shown in Table~\ref{tab:light_condition_physical} demonstrate that
the ASR of \Name is higher in poor light conditions (e.g., nighttime) than in strong light conditions (e.g., daytime).
Intuitively, sunlight intensity affects the effectiveness of \Name, despite that we have considered the effect of colored light intensity on \Name during the optimization process, but under strong light conditions, the ASR of \Name is quite low, almost close to 0. We discuss the promising methods to improve the attack performance by combining other attack techniques in Section VI.
On the other hand, the ASR of \Name is highest during twilight time because \Name has poor visibility at nighttime.

\subsubsection{Attacking different models and cameras}\hfill
\label{amc}

\noindent\textbf{(2) Attacking different models.}
Table~\ref{tab:transferability_physical} lists the ASR results for \Name transferred between different object detection models.  It is indeed possible to achieve transfer attacks in the real world. For HA, the average of $G_2$-ASR is around 81.2\% to 94.7\%, which has a better performance than MA. Among these models, \Name generated by Faster R-CNN shows better transferability than other models, but not that too much. This demonstrates that \Name is slightly affected by the different model architectures.

\noindent\textbf{(3) Attacking different cameras.}
To further study the impact of the \Name on different cameras, we evaluate the attack effectiveness using RealSense D435i with 1920 * 1080 resolution, iPhone 11 Pro Max with 2688 * 1242 resolution, and DJI Action 3 with 1920 * 1080 resolution, respectively. Table~\ref{tab:cameras} lists the ASR of \Name on the three cameras. We use YOLOv5 as the victim model by default. \Name shows not much difference between these cameras, where the ASRs are around 50\%. 

\begin{table}[h]
\caption{Transferability across detection models in the physical world.}
\vspace{-5pt}
\setlength{\tabcolsep}{1mm}{\resizebox{\linewidth}{!}{
\begin{tabular}{ccccccc}
\hline
 Source  & \multicolumn{2}{c}{YOLOv3} & \multicolumn{2}{c}{YOLOv5} & \multicolumn{2}{c}{Faster R-CNN} \\ \hline
 Target & \multicolumn{1}{c}{YOLOv5} & \multicolumn{1}{c}{Faster R-CNN} & \multicolumn{1}{c}{YOLOv3} & \multicolumn{1}{c}{Faster R-CNN} & \multicolumn{1}{c}{YOLOv3} & \multicolumn{1}{c}{YOLOv5} \\ \hline
 BA& 85.4 & 84.3 & 76.3 &  70.8& 72.1 & 81.1\\
$G_1$-ASR (MA) & 59.8 & 64.3 & 56.4 &  61.9& 88.7 &  82.6\\
$G_2$-ASR (HA)& 81.2 & 90.6 & 94.7 & 86.4 & 87.6 & 83.2 \\ 
ASR & 37.1 & 32.1 & 39.8 & 44.5 & 45.2 & 41.1 \\ \hline
\end{tabular}}}
\label{tab:transferability_physical}
\vspace{-8pt}
\end{table}

\begin{table}[h]
\caption{Impact on camera type.}
\vspace{-5pt}
\setlength{\tabcolsep}{2mm}{\resizebox{\linewidth}{!}{
\begin{tabular}{cccccc}
\hline
Camera & Resolution & BA& $G_1$-ASR (MA) &  $G_2$-ASR (HA)& ASR  \\ \hline
RealSense D435i & 1920 * 1080 & 83.8 & 58.8 &  96.7 &48.7  \\
iPhone 11 Pro Max &  2688 * 1242 &82.1&60.9   & 90.6& 54.4 \\
DJI Action 3 & 1920 * 1080 & 80.9 & 56.1 &  94.6 & 50.7 \\ \hline
\end{tabular}}}
\label{tab:cameras}
\vspace{-8pt}
\end{table}

\begin{table}[h]
\caption{Evaluation under different sunlight conditions in the physical world.}
\vspace{-5pt}
\setlength{\tabcolsep}{4mm}{\resizebox{\linewidth}{!}{
\begin{tabular}{ccccc}
\hline
  &BA & $G_1$-ASR  (MA) &  $G_2$-ASR (HA) & ASR  \\ \hline
Daytime  & 97.4 & 1.2 &  20.3 & 0.1 \\
Twilight time  &83.8  & 58.8   &96.7  & 48.7 \\
Nighttime  & 7.0 &50.0 & 73.1 & 3.6 \\ \hline
\end{tabular}}}
\label{tab:light_condition_physical}
\vspace{-15pt}
\end{table}

\subsubsection{Dynamic Evaluation Results}
\label{der}

\subsubsection{Depth Estimation}
\label{de}

\begin{table}[h]
\caption{ASR of \Name on different thresholds with Mono2 in the physical world.}
\vspace{-5pt}
\setlength{\tabcolsep}{3.5mm}{\resizebox{\linewidth}{!}{
\begin{tabular}{l|lllll}
\hline
      Thresholds       & 10\% & 14\% & 18\% & 22\% & 26\% \\ \hline
BA           &87.5    &87.5    &87.5    &87.5    &87.5  \\ \hline
G1-ASR(NA) &78.0    &66.7    &60.3    &40.6    &32.4   \\ \hline
G2-ASR(FA)  &53.8    &45.8    &39.7    &30.2    &28.1 \\ \hline
\end{tabular}}}
\label{tab:depth_thresholds}
\vspace{-5pt}
\end{table}

\noindent\textbf{Impact of depth threshold.}
We then evaluate how much the attack can change the depth, the larger the change in depth value, the more harmful it is. In our experiments, we set different depth thresholds as the determination criteria and calculate the success rate of the attack respectively. A depth threshold is a critical value set when evaluating the effectiveness of an adversarial attack, and an attack is considered successful if the average value of the depth change exceeds this threshold.
As shown in Table ~\ref{tab:depth_thresholds}, the effect of our attack can change the depth value by more than 26\%, which reflects the effectiveness of the attack.

\subsection{Additional Experiments on Traffic Sign Recognition}
\label{sec:appendix}

\noindent\textbf{Impact of the color-goal combinations.} We investigate how the color selection can affect the goals. We switch the color and align with other attack goals. 
Specifically, we choose blue, green, orange and purple colors, which are aligned with Goal\_1 (No vehicles), Goal\_2 (Pedestrians), Goal\_3 (Speed limit 80), Goal\_4 (Ahead only) in classification and Goal\_1 (HA), Goal\_2 (Traffic light), Goal\_3 (Umbrella), Goal\_4 (Bird) in object detection, respectively. 

The results are shown in Table~\ref{tab:color_goal_classification}. We have the following observations. First, Blue generally shows higher effectiveness across different models and goals in both two tasks. For instance, in image classification, VGG-16 achieves an ASR of 95.9\% with Green (Goal\_2) under Blue. Similarly, in object detection, Faster R-CNN achieves an 80.3\% ASR under Blue for Green (Goal\_2).
On the contrary, Purple tends to be less effective compared to other colors. For example, MobileNetV2 only achieves a 37.9\% ASR under Purple for Green (Goal\_2) in image classification. Second, the combination of Blue, Green, and Orange with Goal\_1 and Goal\_2 shows higher effectiveness than Purple combinations for both tasks, making them particularly potent for deploying \Name in adversarial settings. 

\begin{table}[H]
\vspace{-10pt}
\caption{ASR of \Name with different $W_k$ on YOLOv3.}
\setlength{\tabcolsep}{2mm}{\resizebox{\linewidth}{!}{
\begin{tabular}{c|c|ccc}
\hline
 Colors & $w_k$ & $G_1$-ASR &$G_2$-ASR & $G_3$-ASR  \\ \hline
\multirow{4}{*}{(Blue, Green, Orange)} & 0.9, 0.1, 0.0 &  76.0   &30.2   &0.0   \\
 & 0.5, 0.3, 0.2& 61.6   &84.3   &0.7 \\
 & 0.2, 0.6, 0.2 & 15.7   &85.8   &33.5   \\
 & 0.2, 0.2, 0.6 & 45.9   &65.8   &71.3  \\ \hline
\end{tabular}}}
\label{tab:weights}
\end{table}

\begin{table}[H]
\caption{ASR(\%) of \Name on color-goal combinations with traffic sign recognition.}
\vspace{-5pt}
\setlength{\tabcolsep}{2mm}{\resizebox{\linewidth}{!}{
\begin{tabular}{l|l|cccc}
\hline
\multirow{2}{*}{Models} & \multirow{2}{*}{} & \multirow{2}{*}{\begin{tabular}[c]{@{}c@{}}Green\\ (Goal\_1) \end{tabular}}  & \multirow{2}{*}{\begin{tabular}[c]{@{}c@{}}Blue\\ (Goal\_1) \end{tabular}} & \multirow{2}{*}{\begin{tabular}[c]{@{}c@{}}Orange\\ (Goal\_1) \end{tabular}}& \multirow{2}{*}{\begin{tabular}[c]{@{}c@{}}Purple\\ (Goal\_1) \end{tabular}}\\
&&&&&\\
\hline
\multirow{4}{*} {VGG-16} & Green (Goal\_2) & \ding{55} &95.9  &29.7  &45.5 \\
 & Blue (Goal\_2) &69.2  & \ding{55}  &51.1  &33.5 \\
 & Orange (Goal\_2) &39.8  &69.6  & \ding{55} &51.0 \\ 
 & Purple (Goal\_2) &44.0  &32.3 &56.9   &\ding{55}\\ 
 \hline
\multirow{4}{*} {ResNet-50} & Green (Goal\_2) & \ding{55} &65.4 &35.8  &43.9 \\
 & Blue (Goal\_2) &55.0  & \ding{55}  &49.1 &38.8 \\
 & Orange (Goal\_2) &41.0  &63.1  &\ding{55} &63.6  \\ 
 & Purple (Goal\_2) &32.5  &55.8  &65.3   &\ding{55}\\ 
 \hline
\multirow{4}{*} {MobileNetV2} & Green (Goal\_2) & \ding{55} &64.1  & 37.9 &44.4\\
 & Blue (Goal\_2) &64.7  & \ding{55}  &49.2  &37.8\\
 & Orange (Goal\_2) & 43.4 & 64.3 &  \ding{55} &60.0\\ 
 & Purple (Goal\_2) & 37.9 &  45.5 &  64.6 &\ding{55}\\ 
 \hline
\multirow{4}{*} {YOLOv3} & Green (Goal\_2)  & \ding{55} &85.9  &70.4 &65.1   \\
 & Blue (Goal\_2) &59.4  & \ding{55}  &47.6 &45.7 \\
 & Orange (Goal\_2) &60.2 &55.7 & \ding{55} &55.2 \\
 & Purple (Goal\_2) &57.3  &70.6 &62.1   &\ding{55} \\ \hline
\multirow{3}{*} {Faster R-CNN} & Green (Goal\_2) & \ding{55}  &80.3  &75.9  &50.8\\
 & Blue (Goal\_2) & 40.7  & \ding{55}  &65.3& 42.6\\
 & Orange (Goal\_2) &45.8  &55.6  & \ding{55} &  25.8\\
 & Purple (Goal\_2) & 65.3& 35.2  & 45.7& \ding{55}\\
 \hline
\end{tabular}}}
\label{tab:color_goal_classification}
\end{table}

\noindent\textbf{Impact of $w_k$ for attack goals.} 
\Name can adjust the weights of different attack goals according to specific scenarios. For example, the attacker wants to prioritize achieving some of the attack goals among $N$ attack goals. \Name provides the weight $w_k$ for adjusting the attack goals during the generation process. In the previous experiment, we set the weight of each attack goal to the same value. In this section, we study the impact of $w_k$ on the performance of \Name. Specifically, for each attack goal, we adjust its weight from high to low accordingly. Table~\ref{tab:weights} gives the results evaluated on YOLOv3.

Obviously, $G_i$-ASR improves with the increase of weight $w_k$, indicating the attacker can adjust the attack effect by himself, which gives him more freedom to choose the attack effect he wants.

\subsection{Additional Table and Figure}
\label{sec:ATF}

\begin{figure}[h]
\centering
\includegraphics[width=0.9\linewidth]{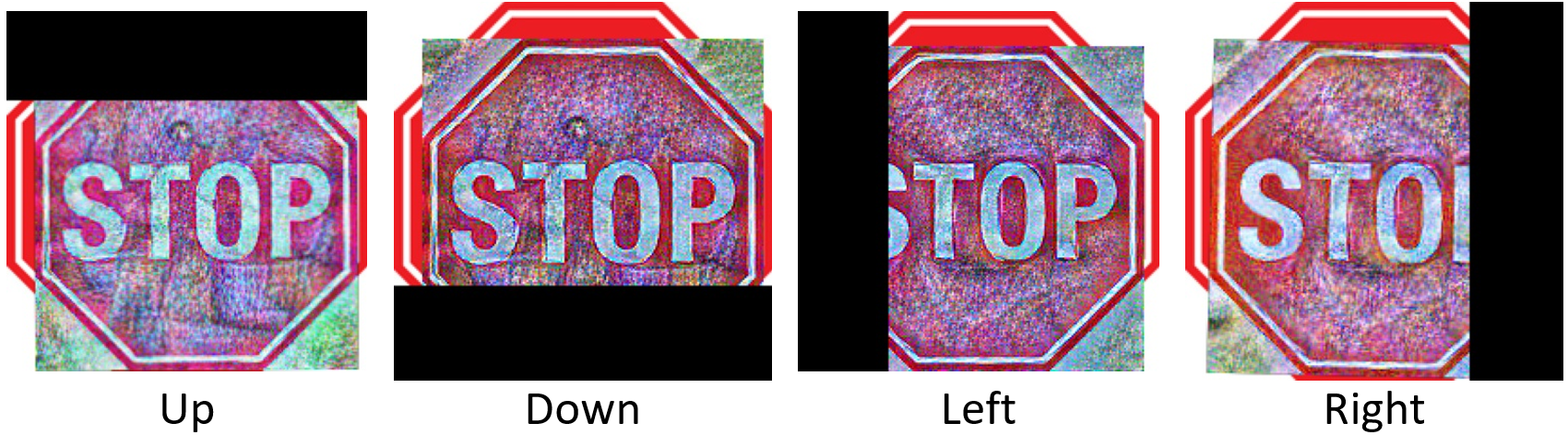}
\vspace{-5pt}
\caption{Occluding different parts as activation conditions.}
\label{fig:occluding}
\end{figure}

\begin{table}[h]
\caption{Different attack techniques.}
\vspace{-5pt}
\setlength{\tabcolsep}{2mm}{\resizebox{\linewidth}{!}{
\begin{tabular}{c|ccc}
\hline
\multirow{2}{*}{} & \multirow{2}{*}{\begin{tabular}[c]{@{}l@{}}Colored lights\\ (Blue and Green)\end{tabular}} & \multicolumn{1}{c}{\multirow{2}{*}{\begin{tabular}[c]{@{}c@{}}Occlusion\\ (Up and Down)\end{tabular}}} & \multicolumn{1}{c}{\multirow{2}{*}{\begin{tabular}[c]{@{}c@{}}Occlusion\\ (Left and Right)\end{tabular}}} \\
 &  & \multicolumn{1}{c}{} & \multicolumn{1}{c}{} \\ \hline
 BA & 100 & 96.6 & 100.0\\ \hline
$G_1$-ASR (HA) & 91.8 & 95.4 & 100.0\\ \hline
$G_2$-ASR (MA) & 95.6 & 5.9& 15.6 \\ \hline
ASR & 85.9 & 5.2 &  15.6\\ \hline
\end{tabular}}}
\label{tab:occluding}
\vspace{-8pt}
\end{table}

\newpage
\subsection{Algorithms}
\label{algorithm23}

\vspace{-10pt}
\begin{algorithm}[H]
\caption{Generation of Adversarial Patch}
\label{alg:unified_main}
\small
\begin{algorithmic}[0]
\Require
\parbox[t]{0.9\linewidth}{%
image $x\in X$; labels $y\in Y$; model $f\colon X\rightarrow Y$; attack iterations $\mathit{iter}$; trigger type $m$\\[2pt]
\textbf{Type-1 Separate Trigger Architecture params:}

triggering context $c$; EOT transform functions $T$; task $\mathcal{Q}$; self-balancing hyper-parameter $\gamma$; smoothness weight $\beta$; step size $\eta$;
attack goals $G_1=y_1$, $G_2=y_2$; fusion weight $\lambda$; target mask $M$; label weight $\mathcal{W}$%

\textbf{Type-2 Overlapping Trigger Architecture params:}

pre-specific conditions $\{cl_1,\dots,cl_N\}$; 

attack goal set
$G_s=\{HA;MA_1;\dots;MA_N;FA;NA\}$; weights for attack goals $w_k$;
stealthiness weights $\alpha,\beta,\gamma$; Adam hyper-parameters;
}
\Ensure
\parbox[t]{0.9\linewidth}{%
Type-1 Separate Trigger Architecture:   $\delta_1$;\\ 
Type-2 Overlapping Trigger Architecture:  $\delta_2$%
}

\State \textbf{Initialization:} $\delta \leftarrow \textsc{RandomInit}(0,255)$
\If{$m=\texttt{Type-1}$}
    \State $\delta_1 \leftarrow \textsc{GenerationType1}(x,y,f,c,T,\mathcal{Q},$
    \par\hspace{4em} $\gamma,\beta,\eta,\mathit{iter},y_1,y_2,\lambda,M,\mathcal{W})$
    \State \textbf{return} Type-1 $\delta_1$
\Else
    \State $\delta_2 \leftarrow \textsc{GenerationType2}(x,y,f,\{cl_k\},G_s,w_k,$
    \par\hspace{4em} $\alpha,\beta,\gamma,\mathit{iter})$
    \State \textbf{return} Type-2 $\delta_2$
\EndIf
\end{algorithmic}

\end{algorithm}

\begin{algorithm}[H]
\caption{Generation of Type-1 (Separate Trigger)}
\label{alg: Generation}
\begin{algorithmic}[1] 
\Require 
    Image $x \in X$; Labels $y \in Y$; CV model $f: X \rightarrow Y$; 
    Triggering context $c$; EOT transform functions $T$; Task $\mathcal{Q}$; 
    Self balancing hyper-parameter $\gamma$; Weights for smoothness loss $\beta$; 
    Attack iterations $\mathit{iter}$; Attack goals: $G_1 = y_1$, $G_2 = y_2$; 
    Fusion weight $\lambda$; Target mask $M$; Label Weight $\mathcal{W}$;
\Ensure %
    Type-1 $\delta_1$

\State $\delta_1 \gets \Call{RandomInit}{0,255}$
\State $t \gets 1$, $r_1 \gets 1$, $r_2 \gets 1$

\For{$1 \dots \mathit{iter}$}
    \For{$1 \dots \mathit{BatchSize}$}
        \State $x \gets \Call{BkgAug}{x}$ \Comment{Augment Background}
        \State $\delta_1 \gets T(\delta_1)$ \Comment{EOT}
        \State $x' \gets x + \lambda \cdot M \odot \delta_1$ \Comment{Add Type-1 to input}
        \State $c \gets \Call{ConAug}{c}$ \Comment{Augment Trigger}
        \If{$\mathcal{Q} = \texttt{Seg}$}
            \State $\mathcal{L} \gets \mathcal{L}_1\big(f(x'_t \oplus (c,P)), \mathcal{W} \odot y_1\big)$
            \label{line:map}
        \Else
            \State $\mathcal{L} \gets \mathcal{L}_1\big(f(x'_t \oplus (c,P)), y_1\big)$
        \EndIf
        \State $\mathcal{L} \gets \mathcal{L} + \Big(\frac{r_2}{r_1}\Big) ^ {\gamma / \sqrt{t}} \cdot \mathcal{L}_2\big(f(x'), y_2\big) + I'_{\mathrm{cross}}$
        \State $r_1 \gets \mathcal{L}_1$ \Comment{Records for self-balancing}
        \State $r_2 \gets \mathcal{L}_2$
    \EndFor
    \State $t \gets t+1$
    \State $\delta_1 \gets \delta_1 - \eta \nabla_{\delta_1}\mathcal{L}$
    \State $\delta_1 \gets \Call{Clamp}{\delta_1, 0, 1}$
\EndFor
\State \Return Type-1 $\delta_1$
\end{algorithmic}

\end{algorithm}


\begin{algorithm}[H]

\caption{Generation of Type-2 (Overlapping Trigger)}
\label{alg:alg_2}
\begin{algorithmic}[1]
\Require image $x \in X$; labels $y \in Y$; model $f: X \rightarrow Y$; Pre-specific conditions $\{ cl_1, cl_2, ..., cl_N\}$; weights for attack goals $w_k$; weights for stealthiness $\alpha$, $\beta$, and $\gamma$; attack iterations $\textit{iter}$
\raggedright
\Require attack goal set:\quad  $G_{s} = \{HA; MA_1; ...; MA_N; FA; NA \} $;
\Ensure Type-2 $\delta_2$ 
\Initialization $\delta_2$ = $x + \delta$
\For{$t = 0, \dots, N_{iter} - 1$}
    \If{Classification}
        \State use $\mathcal{L}_{cl}^{k}$ in Eq 11
    \ElsIf{Detection}
        \State use $\mathcal{L}_{cl}^{k}$ in Eq 12
    \ElsIf{Depth Estimation}
        \State use $\mathcal{L}_{cl}^{k}$ in Eq 13
    \EndIf
    \State Calculate loss $\mathcal{L}$ in Eq 35
    \State Implement Adam optimizer to calculate patch gradient
    \State $grad = Adam(\delta_2, \mathcal{L})$
    \State $\delta_2 \leftarrow \delta_2 + grad$
\EndFor
\State \Return Type-2 $\delta_2$
\end{algorithmic}
\label{alg:alg_1}
\end{algorithm}


\vfill

\end{document}